\newcommand{\E}{\mathbb{E}}
\newcommand{\N}{\mathbb{N}}
\newcommand{\R}{\mathbb{R}}
\newcommand{\cM}{\mathcal{M}}
\newcommand{\cP}{\mathcal{P}}
\newcommand{\cR}{\mathcal{R}}
\newcommand{\cT}{\mathcal{T}}
\newcommand{\cX}{\mathcal{X}}
\newtheorem{theorem}{Theorem}
\newtheorem{lemma}[theorem]{Lemma}
\newtheorem{proposition}[theorem]{Proposition}
\newtheorem{corollary}[theorem]{Corollary}
\newcommand{\eq}[1]{\begin{align*}#1\end{align*}}
\newcommand{\eqn}[1]{\begin{align}#1\end{align}}
\newcommand{\EQ}[1]{\begin{equation*}#1\end{equation*}}
\newcommand{\EQN}[1]{\begin{equation}#1\end{equation}}
\newcommand{\meq}[2]{\begin{xalignat*}{#1}#2\end{xalignat*}}
\newcommand{\ieeeproof}[1]{\begin{IEEEproof}#1\end{IEEEproof}}
\newcommand{\set}[1]{\left\{#1\right\}}
\newcommand{\indicator}[1]{1_{\set{#1}}}
\newcommand{\abs}[1]{\lvert#1\rvert}
\newcommand{\opi}{\overline{\pi}}
\newcommand{\upi}{\underline{\pi}}
\newcommand{\oy}{\overline{y}}
\newcommand{\uy}{\underline{y}}
\newcommand{\oj}{\overline{j}}
\newcommand{\uj}{\underline{j}}
\newcommand{\oJ}{\overline{J}}
\newcommand{\uJ}{\underline{J}}
\renewcommand{\ge}{\geqslant}
\renewcommand{\le}{\leqslant}
\newcommand{\nn}{\nonumber}
\newcommand{\heading}[1]{ {\it {\bf #1 } }}
\begin{document}
\title{Mode-Suppression: A Simple, Stable and Scalable Chunk-Sharing Algorithm for P2P Networks}
% author names and affiliations
% use a multiple column layout for up to three different
% affiliations
% \author{Paper number 1570387150% \IEEEauthorblockN{Vamseedhar Reddyvari}
% }

\author{Vamseedhar Reddyvari, Parimal Parag,~\IEEEmembership{Member~IEEE,} and Srinivas Shakkottai,~\IEEEmembership{Senior Member,~IEEE}
\thanks{V. Reddyvari and S.~Shakkottai are with the Department of Electrical and Computer Engineering, Texas A\&M University, College Station, TX, 77843 USA (e-mail: vamseedhar.reddyvaru, sshakkot@tamu.edu).}
\thanks{P.~Parag is with the Department of Electrical Communication Engineering, Indian Institute of Science, Bengaluru, India (email: parimal@iisc.ac.in).}
}

% \thanks{
% Research was funded in part by NSF grants CNS-1149458, AST 1443891, and Science \& Engineering Research Board (SERB)  grant number DSTO-1677.
% Any opinions, findings, and conclusions or recommendations expressed in this material are those of the authors and do not necessarily reflect the views of the funding agencies.
% }

% make the title area
\maketitle

% As a general rule, do not put math, special symbols or citations
% in the abstract
\begin{abstract}
The ability of a P2P network to scale its throughput up in proportion to the arrival rate of peers has recently been shown to be crucially dependent on the chunk sharing policy employed. Some policies can result in low frequencies of a particular chunk, known as the missing chunk syndrome, which can dramatically reduce throughput and lead to instability of the system. For instance, commonly used policies that nominally ``boost''  the sharing of infrequent chunks such as the well-known rarest-first algorithm have been shown to be unstable. Recent efforts have largely focused on the careful design of boosting policies to mitigate this issue.  We take a complementary viewpoint, and instead consider a policy that simply prevents the sharing of the most frequent chunk(s). Following terminology from statistics wherein the most frequent value in a data set is called the mode, we refer to this policy as mode-suppression.  We also consider a more general version that suppresses the mode only if the mode frequency is larger than the lowest frequency by a fixed threshold.  We prove the stability of mode-suppression using Lyapunov techniques, and use a Kingman bound argument to show that the total download time does not increase with peer arrival rate.    We then design versions of mode-suppression that sample a small number of peers at each time, and construct noisy mode estimates by aggregating these samples over time.  We show numerically that the variants of mode-suppression yield near-optimal download times, and outperform all other recently proposed chunk sharing algorithms. 
\end{abstract}

\IEEEpeerreviewmaketitle

\section{Introduction}\label{intro}

Peer-to-Peer (P2P) file sharing networks such as BitTorrent~\cite{bittorrent} have been studied intensely in recent years, using analytical models, simulation studies, and large scale field experiments.  This interest partly stems from the dominance of P2P as a source of Internet traffic in past years.  Even today, although the traffic fraction has reduced to around 3-4\% in North America, P2P sharing still occupies a significant fraction of about 30\% of traffic in the Asia-Pacific region~\cite{sandvine2016}.  Interest also stems from a desire to understand the thought-provoking phenomenon of apparent scaling up of the throughput of a P2P network as the number of peers grows, which enables them to effectively distribute content with low file-download times during high demand situations called \emph{flash-crowds}.

In a P2P network, a file is divided into fixed-size chunks, and a peer possessing a set of chunks can upload those chunks to other peers that need them.  Once a peer has downloaded all chunks, it could continue to serve other peers or leave the system.  A so-called \emph{seed server} that possesses all chunks and never leaves is often used to ensure that no particular chunk ever goes missing.  It is the feature of integrating the upload capacity of each peer into the system that is supposed to enable system-wide throughput scaling up with the number of peers.  However, since peers can only share chunks that they possess, it is crucial to ensure the wide availability of all chunks to enable maximum usage of available upload capacity with each peer.

The problem of ensuring that all chunks are easily obtainable---ideally by engendering equal numbers of copies of each chunk over the network---was considered by the original designers of P2P networks.
For example, BitTorrent, which is the most popular P2P network protocol, uses an algorithm called \emph{rarest-first} (RF) to try to achieve this goal~\cite{bittorrent}.
Here, the idea is to keep a running estimate of the frequency of all chunks in the system.
When a peer has a chance to download a chunk, it chooses the least frequent (i.e., the ``rarest'') among all the chunks that it needs.
In practice, peers keep track of the frequency of chunks in local subsets.
Intuition suggests that such ``boosting'' of rare chunks might ensure a near-uniform empirical distribution of chunks.

Recent work has postulated that under some conditions, the rarest-first policy used by BitTorrent actually does not achieve its goal, and can actually be harmful to system performance.    In particular,~\cite{hajek2010} studied a chunk-level model of P2P sharing under which new peers that do not possess any chunks arrive into the system at some rate, contacts between peers happen at random, and at each contact a chunk is transferred to a requesting peer under a given policy.  Peers depart immediately after completing the file download.  The objective was to determine if the system is \emph{stable} under a given policy, i.e., at any time is the number of peers that have not yet received the whole file finite or is it exploding to infinity?   The result was that under several policies including rarest-first and random chunk selection, a particular chunk can become very rare across the network---a phenomenon referred to as the \emph{missing chunk syndrome}.  This causes the creation of a large set of peers that are missing only that one chunk, referred to as the \emph{one club}.  In turn, the seed server must serve the missing chunk to almost all peers (which then depart), which means that the system is unstable unless the upload capacity of the seed server is of the order of the arrival rate of peers into the system.  Thus, the phenomenon largely negates the value of the P2P system.  %A different viewpoint is presented in ~\cite{MasVoj08}, wherein peers arrive into the system already possessing one randomly selected chunk.  This system is stable for many policies (including random chunk selection), but is constrained by the fact that the initial chunk has to be provided by the seed, i.e, its capacity must scale with the arrival rate of peers.

More recently, experimental studies have revealed that the missing piece syndrome is an observable phenomenon occurring in BitTorrent networks ~\cite{towsley2017}.
The results show that when the seed server has low or intermittent upload capacity, the throughput of the system saturates as the number of peers grows.
In turn, this causes lengthened stay of peers in the system between arrival and completion,
where an increasingly large number of peers are waiting to obtain the final chunk before leaving.
In other words, designing policies that can ensure stability of a P2P network under a fixed seed server capacity for all peer arrival rates is practically relevant.

\subsection{Related Work on Stable Algorithms}\label{sec:algos}

There has been extensive work on P2P networks, and we refer here only to those directly relating to the scaling properties of a single swarm.
A large system assumption was made in~\cite{srikant2004,gustavo2006,ShaJoh10}, and the evolution of peers and seeds is described using a system of differential equations.  While~\cite{srikant2004,gustavo2006} study the stationary regime and indicate the stability of BitTorrent-like systems for all arrival rates, ~\cite{ShaJoh10} considers the transient regime and studies how much seed server capacity is needed to attain a target sojourn time (the time between the arrival of a peer and its completing the file download).
Results on stability and scaling here require that at least a fixed fraction of the peers' upload capacity can always be utilized---an implicit assumption of chunk availability.
As shown in~\cite{hajek2010}, this assumption need not hold for all chunk selection policies, and a chunk-level model is needed for accurate analysis.

Chunk-level models have considered the missing chunk problem from two angles.
The first method is to explicitly insist that peers that have completed the download should stay in the system as servers for some period of time.
For example, ~\cite{lui2009} presents results on fairness vs. system performance based on how long peers stay after completion.
In a more recent work~\cite{hajek2012}, it was analytically shown that the system is stable as long as peers stay long enough to serve of the order of one additional chunk after completion.
Indeed, in the original BitTorrent implementation this often happened naturally, since most users manually stopped participation at some point after download was completed.
However, current implementations allow for the peer to depart immediately after completion, which can lead to the instability observed in~\cite{towsley2017}.

The second method is to assume that peers would leave immediately after completion, and to design the chunk sharing policy such that the missing chunk syndrome is avoided.  Some algorithms of this nature are ``boosting'' policies that can be thought of as modified versions of rarest-first.  For example, the \emph{rare chunk} (RC) algorithm studied in~\cite{reittu2009,reittu2011,oguz2015} picks three peers at random and chooses a chunk that is available with exactly one of the selected peers (called a ``rare'' chunk).  Also studied in~\cite{oguz2015} is a variant of this algorithm called the \emph{common chunk} (CC) algorithm, which proceeds as in the RC algorithm when the peer has no chunks, then follows a policy of sampling a single peer with random selection among its required chunks until it only needs one more chunk, and then proceeds by sampling three peers and only downloading a chunk if every chunk with it appears at least twice with the sampled peers.  However, although stable, these algorithms appear to have long sojourn times in some settings ~\cite{wagner2017}.

More recent work on chunk sharing policies~\cite{wagner2017} describes an algorithm called \emph{group suppression} (GS), which is based on observations made in~\cite{hajek2010}.
The policy is based on computing the empirical distribution of the states in the system, where a state of a peer is the set of chunks available with that peer.
Peers that belong to the state with highest frequency are not allowed to upload chunks to peers that have fewer chunks than themselves, thus suppressing entry into the highest frequency group.
Although this policy appears to have low mean sojourn times in simulations, it can have high variability.  Also, is complex since it requires the knowledge of the entire empirical state distribution.
Furthermore, the authors are only able to prove stability in a P2P network with exactly 2 chunks, while the stability of the general case is left as a conjecture.

A different model is presented in~\cite{vojnovic2008coupon}, wherein peers arrive into the system already possessing one randomly selected chunk.
This system is stable for many policies (including random chunk selection), but is constrained by the fact that the initial chunk has to be provided by the seed server.  Thus, in this case too the seed's capacity must scale with the arrival rate of peers,  and the system might be unstable otherwise.

%\change{Another paradigm to increase the stability of a P2P system is that of a \emph{Universal Swarm}. The idea is to bundle swarms together, and peers that belong to one swarm download and transmit pieces from other swarms as long as they are in the system.  Recent work~\cite{Zhu-Universal,Zhou-Universal} shows that these universal swarms have a larger stability region than single swarms. }

\subsection{Main Results}
%main observation stable chunk sharing policies implicitly extend the time spent by peers in the system by preventing them from receiving chunks.  However, denying chunks would also increase total download times by rendering the system not work conserving.  The question really is how to ensure that only a small amount of upload capacity is unused.

The nominal objective of Rarest-First is to ensure a uniform chunk distribution across the network, which it actually does not achieve in all cases, causing instability as shown in~\cite{hajek2010}.
Our intuition is that rather than following a policy of boosting low-frequency chunks as rarest-first does, simply preventing the most frequent chunk(s) from being shared would allow less frequent chunks to catch up, and drive the empirical distribution of chunks towards the desired uniform distribution.
Implicitly, this would also remove a small fraction of the upload capacity, keeping peers in the system a little longer, and enabling them to share more copies of rare chunks.

Following this intuition, we propose a policy that we call \emph{mode-suppression} (MS), which is based on terminology used in statistics in which the \emph{mode} is the most frequent value(s) in a data set.
In the basic version of this algorithm, we keep track of the frequency of chunks in the system, and when a peer contacts another peer, it is allowed to download any chunk except the one(s) belonging to the mode.
Any chunk may be downloaded if all chunks are equally frequent (i.e., if all chunks belong to the mode).
We extend this idea to a more general version where we do not insist on always suppressing the mode, but only do so when the highest frequency is greater than the lowest frequency by a fixed threshold.
The policy is simple to implement, since all that is needed is the chunk frequency (which is already a part of BitTorrent).%, the information needed for decision making is low.

\begin{table*}[ht]
\vspace{-0.1in}
 \centering
 \caption{Comparison of Chunk Selection Policies}{
\vspace{-0.05in}
 \begin{tabular}{|c|c|c|c|c|c|}
 \hline
\multirow{2}{*}{\heading{Policy}}	& \multirow{2}{*}{\heading{$\mathbf{m=2}$}}	&\multirow{2}{*}{ \heading{$\mathbf{m>2}$}}	                  &\multirow{2}{*}{\heading{Information}}      &\multicolumn{2}{c|}{\heading{ Sojourn time}} \\ \cline{5-6}
& & &  & {\it Choose from 1 Peer } & {\it Choose from 3 Peers } \\ \hline \hline
Random	& Unstable	&  Unstable             & None        & N/A (unstable)                         & N/A (unstable)\\ \hline
Rarest-First (RF)	& Unstable	&  Unstable	      & Chunk Frequency &  N/A (unstable)                        &  N/A (unstable)   \\ \hline
Rare Chunk (RC) & Stable	& Stable                &3 Peers      & Bad                       & Good \\ \hline
Common Chunk (CC) & Stable	& Stable              &3 Peers      & Good & Bad \\ \hline
Group Suppression (GS)	& Stable	& Unknown	      & Complete Distribution& Good                       & {Better}     \\ \hline
 Mode-Suppression (MS T=1)	& Stable	& Stable	      &Chunk Frequency& Good                        & Better \\ \hline
Mode-Suppression (MS T=2m )	& Stable	& Stable	      &Chunk Frequency& Best                        & Best \\ \hline
{Distributed Mode-Suppression} (DMS) & Stable  &Unknown& 3 Peers   & Better                       & Best\\ \hline
{EWMA Mode-Suppression} (MS-EWMA)	& Unknown & Unknown & 1 Peer & Better                        & Best \\ \hline
 \end{tabular}}
% \vspace{-0.2in}
 \label{tbl:comparison}
\end{table*}

We consider a model similar to~\cite{hajek2010,oguz2015,wagner2017} in which peers that have no chunks enter the system according to a Poisson process with a certain arrival rate.
There is a seed server that has an independent Poisson clock of a fixed rate, and at each clock tick, it contacts a single peer and uploads a chunk to it following a given policy.
Each peer also has an independent Poisson clock of a fixed rate, and at each clock tick, the peer contacts a randomly selected peer and downloads a chunk from it following the same policy.

We have two main analytical results under this model.  First, we show using a Lyapunov drift analysis that the general version of mode-suppression with any finite frequency difference threshold is \emph{stabilizing under all peer arrival rates} in a system in which the \emph{file is divided into any number of chunks}.  Second, we show using Kingman bound arguments that for the general version of mode-suppression, \emph{the sojourn time does not increase with peer arrival rate.}   Hence, mode-suppression appears to be able to reduce chunk sharing just enough to maintain stability, without negatively affecting the sojourn time in a scaling sense.

%The result follows from stability analysis using an intuitive Lyapunov function.  %Since the nominal objective is to attain a uniform distribution, we should expect that the policy should promote negative Lyapunov drift whenever the current state differs from uniformity.  Hence, our Lyapunov function is designed over the cases where chunks have differing frequency, where some might have zero frequency, and where all have zero frequency.
%Although the proof is somewhat involved, we show that mode-suppression indeed is able to ensure that the drift is negative in all relevant cases, yielding stability.

% We observe through simulations that mode-suppression actually does come very close to attaining a uniform distribution of chunks in the system.
% In particular, we start the system in a  corner case where one of the chunks is available only at the seed server, and observe the evolution of the system afterwards.
% We also compare with random chunk selection, rarest-first, and group suppression.

We also construct two heuristic variants of the idea that only depend on a smaller set of sample statistics.  The first variant is mode-suppression that samples only one peer at a time and uses the history of interactions to compute a noisy mode based on an exponentially weighted moving average estimate of chunk frequency (MS-EWMA).  The second variant, \emph{distributed mode-suppression} (DMS)  samples 3 peers at a time, and uses a noisy mode constructed from only those samples.  It is straightforward to show that DMS is stabilizing in the case of a system with two chunks following the proof in~\cite{oguz2015}.  However, we primarily study the performance of these heuristic variants via simulations.

We simulate all the algorithms by starting the system in a  corner case where one of the chunks is available only at the seed server, and observe the evolution of the system afterwards.  An additional dimension that we explore is the impact on chunk diversity engendered by being able to pick a chunk from the set possessed across multiple peers, i.e., choice of one chunk from one randomly chosen peer, versus choice of one chunk from the chunk-set of three randomly chosen peers.  We empirically find that MS attains its lowest sojourn time when we set the frequency difference threshold for suppression $T=2m,$ where $m$ is the number of chunks that the file is divided into.    We also find that the variants of MS preformed the best overall, and the case of choosing a chunk from the chunk-set of 3 random peers is near-optimal in terms of sojourn time.   A comparison is presented in Table~\ref{tbl:comparison}.

A preliminary version of this work was presented in \cite{RedPar18}, which only considered the stability of a basic version of mode-suppression.  The current work derives a stability result for a generalized version of mode-suppression that has a frequency difference threshold, empirically determines the right threshold, and develops a Kingman-bound-based sojourn time scaling result.  It thus generalizes and adds to the methodological contributions, as well as to the empirical study.

\section{System Model}
\label{sec:SysMod}
We consider a P2P file sharing system for a single file divided into $m$ chunks.
This file sharing system has a unique seed that has all $m$ chunks,
and the seed stays in the system indefinitely.
Peers arrive according to a Poisson process with rate $\lambda$.
Each incoming peer arrives without any chunks and stays in the system till it obtains all $m$ chunks of the file.
In this model, a peer leaves as soon as it has all $m$ chunks of the file.
%We assume a unique seed with all $m$ fragments, that serves this system indefinitely.
The peers can receive the chunks in two ways, either directly from the seed or from other peers.

Whenever the seed or a peer contacts another peer, it is deemed as a contact.
Therefore, each peer and the seed have individual contact processes corresponding to the sequence of contact instants.
Upon contact, the seed or the peer transfers a missing chunk to the contacted peer, according to a  \emph{chunk selection policy}.
When chunk selection policy depends solely on the current state of the system, % and local randomness,
it is called a Markov chunk selection policy.

\subsection{Contact Processes}
The time interval between two contacts are assumed to be random, independent, and identically exponentially distributed,
i.e. all contact processes are assumed to be independent and Poisson.
%In addition, we assume that all contact processes are independent.
The Poisson contact rate for the seed is assumed to be $U$,
and each peer is assumed to have a common contact rate of $\mu$.
%The sequence of seed contact intervals are taken as independent and identically distributed (\emph{iid}) with the common exponential distribution of rate $U$.
%The sequence of contact intervals at each peer are assumed to be independent of the contact processes of other peers and the seed.
%Peer contact process is assumed to be Poisson with rate $\mu$.

\subsection{State space}
At any time $t$, the number of peers in the system with a proper subset of chunks $S \subset [m]$ is denoted by $X_S(t) \in \N_0 \triangleq \{0, 1, \dots \}$.
The system at time $t$ can be represented by the state
\eq{
X(t) &= (X_S(t): S \subset [m]).
}
%The possible set of values that can be attained by the system state is $\N_0^{2^m-1}$.
The total number of peers at any time $t$ is denoted by
\eq{
|X(t)| &= \sum_{S \subset [m]}X_S(t).
}
For any Markov chunk selection policy, the continuous time process $(X(t), t \geqslant 0)$ is Markov with countable state space $\cX \triangleq \N_0^{\cP([m])\setminus [m]}$.
The \emph{stability region} is defined as the set of arrival rates $\lambda$,
for which the continuous time Markov chain $X(t)$ is positive recurrent.

\subsection{State transitions}
The generator matrix for the process $X(t)$ is denoted by $Q$.
For this continuous time Markov chain, there can only be a single transition in an infinitesimal time.
We denote the system state as $x \in \cX$ just before any transition,
and let $e_S$ be the unit vector in the direction corresponding to a proper subset $S \subset [m]$.

There are three types of possible transitions.
The first type of state transition is the arrival of a new peer,
that leads to an increase in the number of peers with no chunks.
%This transition is denoted by $a_{-1, \emptyset}$
The corresponding transition rate is denoted by
\EQ{
Q(x, x+e_{\emptyset}) = \lambda.
}
The second and third type of transitions occur when a peer with $S \subset [m]$ chunks receives a chunk $j \notin S$ from the contacting seed/peer.
In both these cases, the next state is denoted by $\cT_{S,j}(x)$.
The second type of state transition occurs when the reception of new chunks doesn't lead to a departure.
This transition is denoted by
\EQ{
\cT_{S,j}(x) \triangleq x -e_S + e_{S \cup \{j\}},~\quad x_S > 0, |S| < m-1.
}
%The corresponding transition rate is denoted by
%\eq{
%&Q(x, x-e_S + e_{S \cup \{j\}}),~\quad x_S > 0, |S| < m-1, j \in [m] \setminus S.
%}
The third type of state transition occurs for a peer with $m-1$ chunks,
which departs the system after getting the last chunk upon contact.
This transition is denoted by
\EQ{
\cT_{S,j}(x) \triangleq x - e_S,~\quad x_S > 0, |S| = m-1.
}
As all possible state transitions fall into one of the above three cases, the rate of transition
for any other pair of states would be $0$.
At a system state $x$, if the contacting source has $B$ chunks and the contacted receiving peer has $S$ chunks,
then the set of available chunks that can be transferred is $B \setminus S$.
Selection of which chunk to transfer is called the \emph{chunk selection policy},
which governs the evolution of the process $X(t)$.  %is governed by this policy.
%In particular, $X(t)$ is Markov only if the chunk selection depends solely on the current state.
In particular, the last two transition rates $Q(x, \cT_{S,j}(x))$ can only be computed for a specific Markov chunk selection policy.
%The policy involves selection of the \emph{acceptable transfer set}
%and turns out to be one of the most important decision that affects the system performance.
%that depends only on the current state.
We describe the proposed chunk selection policy and the corresponding transition rates in the following section.
%We describe the specific selection policies and the corresponding transition rates in the following.

\section{Mode-Suppression policy}
\label{sec:SelectPolicy}

In this section, we describe the general version of the
Mode-Suppression (MS) policy (with a finite threshold $T$) and derive its rate transition matrix.
First, we establish some notation.
The set of allowable transfers from a peer with set of chunks $B$ to a peer with set of chunks $S$, is denoted by $A(x, B, S) \subseteq B \setminus S.$
The cardinality of this set is denoted by $h(x,B,S)$, and it takes integer values between $0$ and $m$.
Recall that the seed has all the chunks, and hence the set of allowable chunk transfers by the seed is $A(x, [m], S)$.
Below, we describe the specifics of selecting the set of allowable transfers.

If there are no peers in the system, there is no need for chunk transfer.
Hence, without loss of generality, we consider the mode-suppression policy when there exist peers in the system, or $|x| > 0$.
Here, we assume that each peer has the knowledge of all chunk frequencies in the system.
The frequency of the $j$th chunk is
\begin{equation}
\label{eq:pi_j}
\pi_j(x) \triangleq \frac{\sum_{j \in S}x_S}{|x|}.%,~\quad j \in [m].
\end{equation}
Let $\opi(x)$ and  $\upi(x)$ denote the maximum and minimum chunk frequencies, respectively, in a state $x$.  Then we have
\eq{
&\opi(x) = \max\{\pi_j(x) : j \in [m] \} \quad \text{ and }\\
& \upi(x) = \min\{\pi_j(x): j \in [m] \}.
}

The chunk indices that attain the highest frequency $\arg\max\{\pi_j(x): j \in [m]\}$ are called the modes of the chunk frequencies.
The set of modes is defined as
\EQ{
\cM(x) \triangleq \{j \in [m]: \pi_j(x) = \opi(x)\}.
}
We denote the number of peers with chunk $j$ as
\EQN{
\label{eqn:NumChunks}
y_j(x) \triangleq \sum_{S: j \in S}x_S = \pi_j(x)|x|.
}
We can also define the number of peers with the maximum and the minimum chunk frequency by $\oy(x)$ and $\uy(x)$ respectively.
The number of chunks in the system is denoted by
\EQ{
r(x) \triangleq \sum_{S \subset [m]}|S|x_S = \sum_{S \subset [m]}x_S\sum_{j \in [m]}1_{\{j \in S\}} %= \sum_{j=1}^m\sum_{S: j \in S}x_S
= \sum_{j \in [m]}y_j(x).
}
We can lower bound the total number of chunks by the number of most popular chunk.
Therefore,
\EQN{
\label{eqn:NumChunksLB}
r(x) = \sum_{j \in [m]} y_j(x) \geq \oy(x).
}

For simplicity of presentation, we would drop the dependence on the state $x$ for $y, \pi, r$ when the underlying state $x$ is clear from the context.
Next, we can find a quick bound on the fraction of peers with least popular chunk from its definition.
\begin{lemma}
\label{lemma:pi-bound}
For each state $x$, the fraction of peers with least popular chunk is upper bounded by $\upi \le \frac{m-1}{m}$ and hence $(1-\upi) \ge \frac{1}{m}$.
%If $I(x) \subsetneq [m]$, $\upi \leq \frac{m-1}{m}$ and if $I(x) = [m]$, $\forall i, \pi_i = \upi = \opi \leq \frac{m-1}{m} $.
\end{lemma}
\ieeeproof{
Any peer in the system can have at most $m-1$ pieces,
or else it would leave the system.
The result follows from bounding the total number of pieces in the system as
\EQ{
\sum_{j=1}^m \pi_j |x| = \sum_{S \subset [m]}|S|x_S \le (m-1)\sum_{S \subset [m]}x_S = (m-1)|x|.
}
Since $\pi_j \ge \upi$ for each $j \in [m]$, we have $m \upi |x|  \le (m-1) |x|$.
Hence, it follows that $\upi \le (1 -\frac{1}{m})$ and the result follows.
%  \implies \upi \leq (1 -\frac{1}{m})\\
%  \implies (1-\upi) \geq \frac{1}{m}
}

Now, we will describe the mode-suppression policy. The mode-suppression policy {\it restricts transmission of any chunk that belongs to the set of modes if its count is greater than count of the least frequent chunk by at least $T>0$ units}.
Denote the set of suppressed chunks in state $x$ by $D_{T}(x)$ for some threshold $T \in \N $.
According to MS $D_T(x)$ is given by
\EQN{
 \label{eq:D_T}
D_{T}(x) = \big\{k \in \cM(x) : y_k(x) \ge \uy +T \big\}.
}
The allowable transfer set for MS policy is
\begin{align}
A(x,B, S) &= B \backslash (S \cup D_{T}(x)).
\end{align}

The steps of the mode-suppression policy are shown in Algorithm~\ref{algo:MS} for a generic peer $p.$
\begin{algorithm}
  \caption{Mode-Suppression Policy for peer $p$ }
  \label{algo:MS}
\begin{algorithmic}
  \STATE $S \leftarrow $ Chunk profile of $p$  \\
  \WHILE{$S \neq [m]$}
  \STATE $t \leftarrow t+ \tau, \text{ where } \tau \sim \exp(\mu)$   \\
  \STATE $x \leftarrow X(t)$ \\
  \STATE $\forall j \in [m],$ compute $y_j(x)$ from \eqref{eqn:NumChunks} and $D_T(x)$ from \eqref{eq:D_T} \\
  \STATE Pick a source peer ($B$) randomly \\
  \STATE Choose a chunk ($j$) randomly from $B \backslash \Big(S \cup D_T(x)\Big)$ \\
  \STATE Update $S \leftarrow S \cup \{j\}$ \\
  \ENDWHILE
\end{algorithmic}
\end{algorithm}

The policy of the seed will be similar except for two differences. First, since the contact rate is $U$, $\tau \sim \exp(U)$ and second, seed pushes the chunk to the peer instead of pulling. 

\subsection{Properties of the suppressed set}
Note that if we set $T=1$,  the policy strictly suppresses the mode, and as $T$ becomes larger, we increasingly relax suppression.   When $T \to \infty$, MS is equivalent to the Random Chunk selection policy as there will not be any suppression, and chunks are chosen uniformly and at random. When the difference in number of peers with different chunks are all within threshold $T$,
no chunks are suppressed.
In this case, $D_T(x) = \emptyset$, and we can upper bound the fraction of peers with most popular chunk by the following Lemma.
\begin{lemma}
  \label{lemma:pi-bound2}
  If $D_T(x) = \phi$, then if $|x| > 2Tm$, then $\opi \le 1- \frac{1}{2m}$.
\end{lemma}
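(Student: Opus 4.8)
The plan is to convert the hypothesis $D_T(x)=\emptyset$ into a bound on the gap $\oy-\uy$ between the largest and smallest chunk counts, combine that with the elementary ``at most $m-1$ chunks per peer'' estimate already used in the proof of Lemma~\ref{lemma:pi-bound}, and finally use $|x|>2Tm$ to absorb the resulting error term. No probabilistic input is needed; everything is a counting argument on a fixed state $x\in\cX$.

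First I would observe that $D_T(x)=\emptyset$ means that no chunk in the mode set $\cM(x)$ has count at least $\uy+T$; since every mode chunk has count exactly $\oy$, this forces $\oy<\uy+T$, equivalently $\uy\ge\oy-T$ (the integer refinement $\uy\ge\oy-T+1$ is available but not needed). Next I would lower-bound the total number of chunks $r(x)=\sum_{j\in[m]}y_j(x)$ by isolating the most popular chunk and bounding each of the remaining $m-1$ summands below by $\uy$, giving $r(x)\ge\oy+(m-1)\uy\ge\oy+(m-1)(\oy-T)=m\oy-(m-1)T$. Since no peer holds all $m$ chunks, $r(x)\le(m-1)|x|$, exactly as in Lemma~\ref{lemma:pi-bound}. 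Chaining the two inequalities yields $m\oy\le(m-1)(|x|+T)$, i.e.\ $\opi=\oy/|x|\le\frac{m-1}{m}\bigl(1+\tfrac{T}{|x|}\bigr)$. The hypothesis $|x|>2Tm$ then gives $T/|x|<1/(2m)$, so $\opi<\frac{m-1}{m}\bigl(1+\tfrac{1}{2m}\bigr)=1-\frac{m+1}{2m^2}\le 1-\frac{1}{2m}$, which is the claim.

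I do not anticipate a genuine obstacle; the only point requiring care is quantitative tightness, so I would carry constants through the whole argument rather than use order-of-magnitude estimates. The factor $2$ in the threshold $|x|>2Tm$ is exactly what makes the last step go through: it leaves enough room for the overshoot $\frac{m-1}{m}\cdot\frac{T}{|x|}$ to be dominated, since $\frac{m+1}{2m^2}\ge\frac{1}{2m}$ for every $m\ge1$. A weaker hypothesis such as $|x|>Tm$ would not close the estimate, which is presumably why the lemma is stated with the slightly conservative bound $|x|>2Tm$.
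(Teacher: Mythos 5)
Your argument is correct and rests on the same two ingredients as the paper's proof: the empty suppressed set gives the gap bound $\oy < \uy + T$, and the ``at most $m-1$ chunks per peer'' count bounds the totals, after which $|x| > 2Tm$ absorbs the $T/|x|$ term. The only difference is bookkeeping --- the paper writes $\opi \le \upi + T/|x| \le \frac{m-1}{m} + \frac{1}{2m}$ directly via Lemma~\ref{lemma:pi-bound}, while you route through $r(x)$ to get $\opi \le \frac{m-1}{m}\left(1+\frac{T}{|x|}\right)$, a marginally sharper constant --- so this is essentially the same proof.
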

\ieeeproof{
Since $D_T(x) = \phi$, we have $(\opi - \upi) |x| \le T$,
and Lemma~\ref{lemma:pi-bound} implies that $\upi \le \frac{m-1}{m}$.
Using these results and the hypothesis $|x| > 2Tm$, we have the result
\EQ{
  \opi \leq \frac{T}{|x|} + \upi   \le \frac{m-1}{m} + \frac{T}{2Tm}  =  1 - \frac{1}{2m}.
}
The result implies that $1-\opi \ge \frac{1}{2m}$.
}

{We would like to make two important observations regarding the suppressed set $D_T(x)$. 
We first observe that depending on the threshold $T$, either all modes are suppressed  or none of the modes are suppressed. 
When $\cM(x) = [m]$, then $D_T(x) = \emptyset$ by definition. 
Hence, we consider $\cM(x) \subset [m]$. 
Since $y_k = \bar{y}$ for all $k \in \cM(x)$, we have 
\EQ{
D_T(x) = \begin{cases}
\cM(x), & \text{ if }\oy \ge \uy + T,\\
\emptyset, & \text{ otherwise}.
\end{cases}
}
We next observe that, 
using the definition of chunk frequency, the set of suppressed states can be written as 
\EQ{
D_T(x) = \cM(x) \indicator{\opi \ge \upi +\frac{T}{\abs{x}}} + \emptyset\indicator{\opi < \upi +\frac{T}{\abs{x}}}.
}
That is, the mode-suppression threshold is a function of the peer population. 
As the peer population $\abs{x}$ grows large, the policy strictly suppresses the mode for $\abs{x} \ge T$. 
Contrastingly for small peer population $\abs{x}=1$,  the policy is most relaxed. 
}

\subsection{Transition rates of the contact process}  
{From the superposition of independent Poisson contact processes, 
the rate at which one of the peers with profile $S$ contacts any other peer is also Poisson with the aggregate rate $\mu x_S$. 
The probability of contacting a source peer with profile $B$ among all peers is $\frac{x_B}{|x|}$. 
From the thinning of Poisson process, we get that the Poisson contact process between any recipient peer with profile $S$ and a source peer with profile $B$ has rate $\mu x_S \frac{x_B}{|x|}$. }

{
The contact process between seed and the peers is an independent Poisson process with rate $U$, 
where the seed contacts any peer at random. 
Hence, the Poisson contact process between seed and any peer with profile $S$ occurs at rate $Ux_S\frac{1}{\abs{x}}$. }

{
Since source peer has $B$ chunks, then it can transfer one out of $h(x,B,S)$ available chunks to the destination peer with $S$ chunks.
The transition of type $\cT_{S,j}$ occurs when one of the peers without chunk $j \notin S$ is contacted by seed or contacts a peer with chunks $B$, and receives the chunk $j$ among all the possible choices. 
From the thinning and superposition of independent Poisson processes, we can write for $j \notin S$ and $x_S > 0$
\eq{
&Q(x,\mathcal{T}_{S,j}(x)) = \\
& \begin{cases}
\displaystyle \frac{x_{S}}{|x|} \bigg( \frac{U }{h(x,[m],S)}+ \mu \sum_{B:j \in B } \frac{x_B}{h(x,B,S)} \bigg) & \mbox{ if } j \notin D_{T}(x),\\
\displaystyle 0 & \mbox{ if } j \in D_{T}(x).
\end{cases}
}
All other entries in the rate transition matrix other than the diagonal entries are $0$, and the diagonal entries are equal to the negative sum of rest of the entries in that row.}
 
 %\red{UNTIL HERE IN THIS SUBSECTION CAN GO TO SECTION II.}
  
{
It is difficult to work with exact transition rates for all transitions from state $x$ to state $\cT_{S,j}(x)$. 
We can lower bound the system performance by lower bounding the transition rates when $S \subset \set{j}^c$. 
To this end, we look at the the Poisson contact process of either the seed or one of the peers with chunk $j$ with any peer, with the aggregate rate 
\EQ{
R_j(x) \triangleq U + \mu\sum_{B: j \in B}x_B = U + \mu y_j(x)
}
from the superposition of independent Poisson contact processes. 
The following lemma gives us a lower bound on the transition rates when $S \subset \set{j}^c$,
and the exact transition rate when $S = \set{j}^c$, in terms of the rate $R_j = U + \mu y_j$. }
\begin{lemma}
\label{lem:LowerBoundTransitionRate}
The transition rate from state $x$ to state $x' = \cT_{S,j}(x)$ for the mode-suppression peer-to-peer system is lower bounded by
\EQN{
\label{eqn:LowerBoundTransitionRate}
\frac{x_S}{|x|}R_j \ge Q(x,x') \ge \frac{x_S}{m|x|}R_j, \text{ when } S \subset \set{j}^c.
}
When $S = \set{j}^c$, we can write the corresponding transition rate as
\EQN{
\label{eqn:EqualityTransitionRate}
Q(x,x') = \frac{x_S}{|x|}R_j.%\left(U + \mu\sum_{T: j \in T}x_T\right).
}
\end{lemma}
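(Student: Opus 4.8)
The plan is to derive both inequalities directly from the exact contact-process transition rate established just above the lemma, using nothing more than the elementary bound $1 \le h(x,B,S) \le m$ that holds whenever the transfer of chunk $j$ to a profile-$S$ peer is feasible. I would work with a transferable chunk $j \notin S$, i.e. $j \notin D_T(x)$; if $j \in D_T(x)$ the transition $\cT_{S,j}(x)$ does not occur (rate $0$), and this is the only case not covered — it is never needed, since the lemma is applied only to non-suppressed chunks. In the transferable case,
\EQ{
Q(x,\cT_{S,j}(x)) = \frac{x_S}{\abs{x}}\left(\frac{U}{h(x,[m],S)} + \mu\sum_{B:j\in B}\frac{x_B}{h(x,B,S)}\right).
}
When $S\subset\set{j}^c$, for every source $B$ appearing in the sum (and for $B=[m]$) the chunk $j$ lies in $A(x,B,S)=B\setminus(S\cup D_T(x))\subseteq[m]$, so $1\le h(x,B,S)\le m$ and likewise $1\le h(x,[m],S)\le m$.

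First I would prove the upper bound: replacing each denominator by its minimum value $1$ gives $Q(x,\cT_{S,j}(x)) \le \frac{x_S}{\abs{x}}\big(U + \mu\sum_{B:j\in B}x_B\big) = \frac{x_S}{\abs{x}}R_j$, where the last step uses $y_j=\sum_{B:j\in B}x_B$ from \eqref{eqn:NumChunks} together with the definition $R_j = U + \mu y_j$. The lower bound then follows symmetrically by replacing each denominator by its maximum value $m$, giving $Q(x,\cT_{S,j}(x)) \ge \frac{x_S}{m\abs{x}}\big(U + \mu\sum_{B:j\in B}x_B\big) = \frac{x_S}{m\abs{x}}R_j$. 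Together these yield \eqref{eqn:LowerBoundTransitionRate}.

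For the equality \eqref{eqn:EqualityTransitionRate}, when $S=\set{j}^c$ the recipient is missing only chunk $j$, so for any source $B\ni j$ we have $B\setminus S = B\cap\set{j} = \set{j}$, hence $A(x,B,S)=\set{j}\setminus D_T(x)=\set{j}$ and $h(x,B,S)=1$; the same holds for the seed, $h(x,[m],S)=1$. Substituting collapses every denominator to $1$ and yields $Q(x,\cT_{S,j}(x))=\frac{x_S}{\abs{x}}(U+\mu\sum_{B:j\in B}x_B)=\frac{x_S}{\abs{x}}R_j$. I do not expect a substantive obstacle here: the only points requiring care are the bookkeeping of which chunks belong to $A(x,B,S)$ — in particular that a feasible transfer forces $1\le h\le m$, with $h=1$ exactly when the recipient lacks a single chunk — and being explicit that the lower bound is asserted for the transferable chunks to which it is subsequently applied.
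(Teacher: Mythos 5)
Your proposal is correct and follows essentially the same route as the paper: both start from the exact rate expression with the $h(x,B,S)$ denominators, use $1 \le h(x,B,S) \le m$ for any feasible source (since $j \in A(x,B,S)$ whenever $j \in B$, $j \notin S$, $j \notin D_T(x)$) to get the two bounds, and observe that $S=\set{j}^c$ forces $h(x,B,S)=1$ for every contacting source, giving the equality. Your explicit remark that the bounds are asserted only for non-suppressed chunks $j \notin D_T(x)$ is a small precision the paper leaves implicit, but it matches how the lemma is actually used later.
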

\ieeeproof{
For $S \subset \set{j}^c$,
we can trivially bound the cardinality of the allowable transfers by
\EQ{
1 \le \inf_{j \in B} h(x, B, S) \le \sup_{j \in T} h(x, B, S) \le m.
}
This provides the bounds on the transition rate.

When $S = \{j\}^c$,
it is clear that the set of allowable transfer is $\{j\}$ for the contacting sources.
Hence, $h(x, B, S) = |S^c| = 1$ and the equality for transition rate follows.
}

\section{Stability Region of Mode-Suppression}
In this section we characterize the stability region of mode-suppression.
To prove the positive recurrence of the associated continuous time Markov chain $X(t)$, we employ the Foster-Lyapunov criteria.

{\bf Foster Lyapunov Criteria: } {\it Let $\phi$ be a time homogenous, irreducible and continuous time Markov process and $\cX$ be its state space. If there exists a finite set of states $F\subset {\cX}$, a Lyapunov function $V: \cX \to (0,\infty)$ and some constants $b>0$, $\epsilon>0$, such that
\EQ{
    % \{s:V(s) <K \} \quad  \text{is finite } \forall K\\
    QV(x) \le  -\epsilon + b \indicator{x \in F} \quad \forall x\in {\cX},
}
then $\phi$ is positive recurrent \cite{Meyer-Tweedie, wagner2017}.}

We consider the following Lyapunov function, %we use to show this is,
\EQN{
\label{eqn:LyapunovFunction}
V(x) \triangleq \sum_{i=1}^m(\oy - y_i)^2 +  C_1 (|x|- \oy) +  C_2( M - r)^+,
}
where, $C_1,C_2$ and $M$ are positive constants that satisfies the constraints, $C_1 > (2T-1)(m-1)$, $C_2 \geq \frac{2m^2(C_1 \lambda + \epsilon)}{U}$, $M >  \max \left\{mN_{21},N_{22}\right\},$ where $\epsilon > 0$ and $N_{21},N_{23}$ are positive constants defined in the equations~\eqref{eqn:n21} and~\eqref{eqn:n23}. Note that the explicit dependencies of $\pi(x)$ and $y(x)$ on $x$ are not shown for simplicity.

The intuition behind this Lyapunov function is as follows.  The nominal objective of MS is to approximately attain a uniform distribution of chunks (with the allowable error being related to the threshold value $T$).   Hence, we should expect that the policy should promote negative Lyapunov drift whenever the current state differs from uniformity.  Our Lyapunov function is designed to penalize three cases, namely, (i) where chunks have significantly differing frequency, (ii) where some might have zero frequency, and (iii) where all have zero frequency.

For a Markov process $X(t)$ with associated generator matrix $Q$,
the expected rate of change of potential function from state $x$ is called the mean drift from this state,
and is given by
\EQ{
 QV(x) \triangleq \sum_{y}Q(x,y)(V(y)-V(x)).
}
The mean drift from a state $x$ for the Markov process $X(t)$ for the mode-suppression policy,
in terms of its generator matrix $Q$ can be written as
\eqn{
\label{eq:DeltaV}
QV(x) &= Q(x,x+e_{\emptyset})(V(x+e_{\emptyset})-V(x)) \\
&+ \sum_{j \in [m]} \sum_{S :j \notin S} Q(x,\cT_{S,j}(x))(V(\cT_{S,j}(x))-V(x)). \nonumber
}

First, we compute the mean drift corresponding to a new peer arrival.
The arrival of a new peer does not change the number of peers with chunk $j \in [m]$.
However, it does lead to a unit increase in the number of peers in the system.
That is,
\EQ{
Q(x,x+e_{\emptyset})(V(x+e_{\emptyset})-V(x)) = \lambda C_1.
}
We observe that the set  of chunks $S$ such that $j \notin S$ is identical to $S \subseteq \set{j}^c$.
Hence, we can write the mean drift $QV(x)$ from state $x$ in~\eqref{eq:DeltaV} to be equal to
\EQ{
%&QV(x) =\\
\lambda C_1 + \sum_{j \in [m]}\bigg(\sum_{S \subset \set{j}^c} + \sum_{S = \set{j}^c}\bigg)Q(x,x')(V(x')-V(x)),
}
where $x' = \cT_{S,j}(x)$.
We have the following lemma upper bounding the difference in Lyapunov function between state $\cT_{S,j}(x)$ and $x$.
\begin{lemma}
\label{lem:UpperBoundLyapunovDiff}
For a fixed state $x$,
we can upper bound the difference between Lyapunov functions for states $x$ and $x' =  \cT_{S,j}(x)$ for $j \notin \cM(x)$ as
\EQ{
V(x') - V(x) \le  - U_{11,j}(x) - U_{21}(x)\indicator{S \subset \set{j}^c} + U_{22}(x)\indicator{S =  \set{j}^c}.
}
The corresponding difference between Lyapunov functions for states $x$ and $x' =  \cT_{S,j}(x)$ for $j \in \cM(x)$ is upper bounded by
\EQ{
V(x') - V(x) \le  - U_{12,j}(x) - U_{21}(x)\indicator{S \subset \set{j}^c} +U_{22}(x)\indicator{S =  \set{j}^c},
}
where the following upper bound terms depend only on state $x$,
\eq{
U_{11,j}(x) &\triangleq 2(\oy(x)-y_j(x)) - 1,\\
U_{12,j}(x) &\triangleq C_1 - \sum_{i \neq j}(1+ 2(\oy-y_i)),\\
U_{21}(x) &\triangleq C_2\indicator{M > r(x)},\\
U_{22}(x) &\triangleq C_2(m-1)\indicator{M+m-1 \ge r(x)}.
}
\end{lemma}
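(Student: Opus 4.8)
The plan is to compute $V(\cT_{S,j}(x)) - V(x)$ one summand at a time for the three pieces of $V$ in~\eqref{eqn:LyapunovFunction}, organizing the work around two binary alternatives: whether the transferred chunk is a mode, $j \in \cM(x)$ versus $j \notin \cM(x)$, and whether the recipient departs, $|S| = m-1$ (so $S = \set{j}^c$ and $\cT_{S,j}(x) = x - e_S$) versus $|S| < m-1$ (so $\cT_{S,j}(x) = x - e_S + e_{S \cup \set{j}}$). First I would record the primitive effects of the transition, all immediate from the definitions in Section~\ref{sec:SysMod}: $y_j$ goes up by one iff the recipient does not depart; every $y_i$ with $i \in S$ goes down by one iff the recipient departs; $\abs{x}$ drops by one iff the recipient departs; and $r$ changes by $+1$ (no departure) or by $-(m-1)$ (departure).

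The only bookkeeping that needs care is the motion of the running maximum $\oy$. Because chunk counts are integer-valued, $j \notin \cM(x)$ forces $y_j \le \oy - 1$, so a non-departing transfer of $j$ leaves $\oy$ unchanged, while a departure lowers every $y_i$ with $i \neq j$ by one and hence lowers $\oy$ by exactly one; if instead $j \in \cM(x)$ then $y_j = \oy$, so a non-departing transfer raises $\oy$ by one, whereas a departure never touches $y_j$ and leaves $\oy$ fixed. Feeding this into $\sum_i(\oy - y_i)^2$: when $j \notin \cM(x)$ only the $j$-th square moves, contributing exactly $-U_{11,j}(x)$, and the linear term $C_1(\abs{x} - \oy)$ contributes $0$; when $j \in \cM(x)$ the shift of $\oy$ perturbs all $m-1$ squares indexed by $i \neq j$, contributing $\sum_{i \neq j}\big(2(\oy - y_i) + 1\big)$, which combined with the $-C_1$ produced by the linear term is precisely $-U_{12,j}(x)$.

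It remains to bound the $C_2(M - r)^+$ piece. A non-departing transfer sends $r \mapsto r+1$, so $(M-r)^+$ drops by $\indicator{M > r}$, producing the $-U_{21}(x)$ term exactly; a departure sends $r \mapsto r - (m-1)$, and using $(a + c)^+ - a^+ \le c$ for $c \ge 0$ together with the observation that both clipped values vanish once $r \ge M + m - 1$, this contributes at most $C_2(m-1)\indicator{M+m-1 \ge r} = U_{22}(x)$. Assembling the four cases, with $\indicator{S \subset \set{j}^c}$ picking out the no-departure contribution and $\indicator{S = \set{j}^c}$ picking out the departure contribution, gives the two claimed inequalities. The one genuine pitfall is the $j \in \cM(x)$ case: one must not overlook that incrementing a mode chunk raises $\oy$ and therefore simultaneously changes every squared deviation $(\oy - y_i)^2$ with $i \neq j$ --- this is exactly why $U_{12,j}$ is a sum over $i \neq j$ and not a single term --- and one must track the sign of the $(M-r)^+$ change since a departure makes $r$ decrease. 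Past those two points the argument is routine square expansion together with the elementary clipped-linear inequality.
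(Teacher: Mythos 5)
Your proposal is correct and follows essentially the same route as the paper: split $V$ into the quadratic-plus-linear part and the clipped $C_2(M-r)^+$ part, case on departure ($S=\set{j}^c$) versus non-departure ($S\subset\set{j}^c$) and on $j\in\cM(x)$ versus $j\notin\cM(x)$, track the motion of $\oy$, and bound the clipped term by $(m-1)\indicator{M+m-1\ge r}$ in the departure case. The only cosmetic remark is that in the $j\in\cM(x)$ departure sub-case the $m-1$ squares are perturbed by the drop in the $y_i$'s rather than by a shift of $\oy$ (which you had already noted stays fixed), but the resulting contribution $\sum_{i\neq j}(2(\oy-y_i)+1)-C_1=-U_{12,j}(x)$ is exactly as you state.
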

\ieeeproof{
We can write the Lyapunov function defined in~\eqref{eqn:LyapunovFunction}
as sum of two functions, $V(x) = V_1(x) + V_2(x)$,
where
\eq{
V_1(x) &\triangleq \sum_{i=1}^m(\oy(x) - y_i(x))^2 + C_1(\abs{x} -\oy(x)),\\
V_2(x) &\triangleq C_2(M-r(x))_+.
}
The transitions $\cT_{S,j}(x)$ occur for sets $S \subseteq \set{j}^c$.
We will consider the following two cases.

\textbf{Case $S \subset \set{j}^c$.}
In this case, a transition $\cT_{S,j}(x)$ leads to a peer with set $S$ of chunks receiving chunk~$j$.
This keeps the number of peers unchanged and $|x'| = |x|$.
This transition leads to a unit increase in the number of peers with chunk $j$, and no change in the number of peers with other chunks.
That is, $y_i(x') = y_i(x) + \indicator{i = j}$. %$y_i(x') = y_i(x)$ for all $i \neq j$ and $y_j(x') = y_j(x) +1$.
This implies a unit increase in the number of chunks in the system, i.e. $r(x') = r(x) + 1$.
Hence, % we have
\EQ{
\frac{V_2(x') - V_2(x)}{C_2} = (M-r-1)_+ - (M-r)_+ = -\indicator{M > r}.
}
For $j \notin \cM(x)$,  we have $\oy(x') = \max_{i}y_i(x') = \oy(x)$, and hence
\EQ{
V_1(x') - V_1(x) = (\oy - y_j-1)^2-(\oy- y_j)^2.
}
When $j \in \cM(x)$, we have $\oy(x') = \oy(x)+1$, and hence
\EQ{
V_1(x') - V_1(x) = \sum_{i \neq j}\Big((\oy - y_i+1)^2-(\oy- y_i)^2\Big) -C_1. %+ C_1(\oy(x)-\oy(x')). %+ C_2((M-r-1)_+ - (M-r)+).
}

\textbf{Case $S = \set{j}^c$.}
In this case, a transition $\cT_{S,j}(x)$ leads to a departure of peer that had chunks $\set{j}^c$.
That is, $|x'| = |x| - 1$.
This leads to no change in the number of peers with chunk~$j$,
and a unit decrease in the number of peers with other chunks.
That is, $y_i(x') = y_i - \indicator{i \neq j}$. %$y_i(x') = y_i(x)-1$ for all $i \neq j$ and $y_j(x') = y_j(x)$.
This implies decrease in number of chunks in the system by $m-1$, i.e. $r(x') = r(x) - m + 1$.
Hence, we have
\eq{
\frac{V_2(x') - V_2(x)}{C_2} &= (M-r+m-1)_+ - (M-r)_+ \\
&\le (m-1)\indicator{M + m -1 \ge r}.
}
For $j \notin \cM(x)$,  we have $\oy(x') = \max_{i}y_i(x') = \oy(x)-1$, and hence
\EQ{
V_1(x') - V_1(x) = (\oy - y_j-1)^2-(\oy- y_j)^2. %\\&+ C_2\bigg((M-r+m-1)_+ - (M-r)_+\bigg).
}
When $j \in \cM(x)$, we have $\oy(x') = \oy(x)$, and hence
\EQ{
V_1(x') - V_1(x) = \sum_{i \neq j}\Big((\oy - y_i+1)^2-(\oy- y_i)^2\Big) - C_1. %\\&+ C_1(\abs{x'}-\abs{x})+ C_2\bigg((M-r+m-1)_+ - (M-r)+\bigg).
}
Result follows from combining both the cases for $j \notin \cM(x)$ and $j \in \cM(x)$.
}
We note that when $D_T(x) \neq \emptyset$,
transition to state $\cT_{S,j}(x)$ is possible only for $j \notin D_T(x) = \cM(x)$.
That is, we have $j \notin \cM(x)$ for any transition to state $\cT_{S,j}(x)$.
When $D_T(x) = \emptyset$,
transition to state $\cT_{S,j}(x)$ is possible for all $j \in [m]$.
In particular, it is possible that $j \in \cM(x)$.

\begin{corollary}
\label{cor:NegativeDiff}
Let $x' = \cT_{S,j}(x)$ for $S \subseteq \set{j}^c$.
%%When $j \notin D_T(x) \neq \emptyset $,
%When $j \notin \cM(x)$ we have
%\EQN{
%\label{eqn:PotDiffNonArrivalSuppress}
%V_1(x')- V_1(x) \le -1.
%}
%When $j \in \cM(x)$, we have
%%When $D_T(x) = \emptyset$, we have
We can write the following inequality on the Lyapunov function difference
\EQN{
\label{eqn:PotDiffNonArrivalNoSuppress}
V_1(x')- V_1(x) \le \begin{cases}
-1, & j \notin \cM(x), \\
-(C_1 - (2T-1)(m-1)), & j \in \cM(x).
\end{cases}
}
That is, when $C_1 > (2T-1)(m-1)$, the potential difference $V_1(x')- V_1(x) < 0$ for all $j \in [m]$.
\end{corollary}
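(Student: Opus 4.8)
The plan is to reduce everything to the exact Lyapunov increments already computed inside the proof of Lemma~\ref{lem:UpperBoundLyapunovDiff} and then clean them up using two facts: the $y_i$'s are integers, and the structural description of $D_T(x)$ from Section~\ref{sec:SelectPolicy}. I would treat the two regimes $j \notin \cM(x)$ and $j \in \cM(x)$ separately, since the transition $\cT_{S,j}(x)$ with $j$ a mode is admissible only when the mode is not suppressed, i.e.\ when $D_T(x) = \emptyset$.

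First I would do the case $j \notin \cM(x)$. In both subcases $S \subset \set{j}^c$ and $S = \set{j}^c$, the proof of Lemma~\ref{lem:UpperBoundLyapunovDiff} gives that the linear term $C_1(\abs{x}-\oy)$ is unchanged and the quadratic part changes by exactly $(\oy - y_j - 1)^2 - (\oy - y_j)^2 = 1 - 2(\oy - y_j)$. Since $j \notin \cM(x)$ forces $y_j < \oy$ and both are integers, $\oy - y_j \ge 1$, hence $V_1(x') - V_1(x) \le 1 - 2 = -1$, which is the first line of~\eqref{eqn:PotDiffNonArrivalNoSuppress}.

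Next, the case $j \in \cM(x)$. Here I would first note that this transition can occur only when $D_T(x)=\emptyset$, and that $D_T(x)=\emptyset$ implies $\oy - \uy \le T-1$: either $\cM(x)=[m]$, in which case $\oy=\uy$, or $\cM(x)\subsetneq[m]$ and then $D_T(x)=\emptyset$ exactly means $\oy < \uy+T$. From Lemma~\ref{lem:UpperBoundLyapunovDiff}, in both subcases $V_1(x')-V_1(x) = \sum_{i\neq j}\big((\oy - y_i+1)^2-(\oy-y_i)^2\big) - C_1 = \sum_{i\neq j}\big(2(\oy-y_i)+1\big) - C_1$. Bounding each of the $m-1$ summands via $\oy - y_i \le \oy - \uy \le T-1$ yields $V_1(x')-V_1(x) \le (m-1)\big(2(T-1)+1\big) - C_1 = (2T-1)(m-1) - C_1$, the second line of~\eqref{eqn:PotDiffNonArrivalNoSuppress}. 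The concluding assertion is then immediate: under $C_1 > (2T-1)(m-1)$, both branches are strictly negative, so $V_1(x') - V_1(x) < 0$ for every $j \in [m]$.

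There is no real obstacle here — the statement is essentially bookkeeping on top of Lemma~\ref{lem:UpperBoundLyapunovDiff}. The only point needing care is the $j \in \cM(x)$ branch: one must notice that it arises only under $D_T(x)=\emptyset$ and then extract the quantitative gap bound $\oy-\uy\le T-1$ from the definition of the suppressed set. Without that bound the mode-receiving term cannot be controlled by a constant independent of $x$, and it is precisely this step that makes the threshold condition $C_1 > (2T-1)(m-1)$ necessary.
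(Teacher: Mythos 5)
Your proposal is correct and follows essentially the same route as the paper's own proof: it reads off the exact $V_1$ increments from Lemma~\ref{lem:UpperBoundLyapunovDiff}, uses integrality ($\oy - y_j \ge 1$ for $j \notin \cM(x)$) for the first branch, and for $j \in \cM(x)$ restricts to the only admissible regime $D_T(x)=\emptyset$ to extract $\oy - \uy \le T-1$ and hence the bound $(2T-1)(m-1)-C_1$. The only cosmetic difference is that you bound every summand by $2T-1$ while the paper splits $i \in \cM(x)$ from $i \notin \cM(x)$ before arriving at the same $(2T-1)(m-1)$ bound.
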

\ieeeproof{
%First, we consider the case when $D_T(x) \neq \emptyset$, then $D_T(x) = \cM(x)$.
From the definition of $\cM(x)$, we have $\oy \ge y_j + 1$ for all $j \notin \cM(x)$.
Hence, we have $1 -2(\oy - y_j) \le -1$ for all $j \notin \cM(x)$.

Next, we consider the case when $D_T(x) = \emptyset$.
In this case, a transition to state $x' = \cT_{S,j}(x)$ is possible for $j \in \cM(x)$.
Further, it implies that
\EQ{
1 + 2(\oy - y_i) \le
\begin{cases}
1, & i \in \cM(x),\\
2T - 1, & i \notin \cM(x).
\end{cases}
}
Therefore, for all $j \in \cM(x)$, we have $\sum_{i \neq j}(1 + 2(\oy - y_i)) \le (2T-1)(m-1)$.
%For all $j \notin \cM(x)$, we have $\oy \ge y_j + 1$ which implies $1 -2(\oy - y_j) \le -1$.
}
\begin{lemma}
\label{lem:FracPeerWOSC}
Let the fraction of peers without single chunk $j$ be denoted by $\gamma_j \triangleq \frac{x_{\set{j}^c}}{\abs{x}}$,
 then $\sum_{S \subset \set{j}^c}x_S = (1 - \pi_j - \gamma_j)\abs{x}$ and $\gamma_j \le \opi$.
\end{lemma}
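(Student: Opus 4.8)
The plan is to prove the equality by a partition-and-count argument over the state space $\cX$, and the inequality by identifying which chunks a peer missing only chunk $j$ is forced to hold.

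For the equality, I would begin from the observation that the profiles $S$ with $j \notin S$ are exactly the subsets of $\set{j}^c$, and that this collection splits into the proper subsets $S \subset \set{j}^c$ together with the single profile $S = \set{j}^c$. Hence $\sum_{S \subset \set{j}^c} x_S = \sum_{S : j \notin S} x_S - x_{\set{j}^c}$. I would then use~\eqref{eqn:NumChunks}, namely $\sum_{S : j \in S} x_S = y_j = \pi_j \abs{x}$, to rewrite $\sum_{S : j \notin S} x_S = \abs{x} - y_j = (1 - \pi_j)\abs{x}$, and finally substitute $x_{\set{j}^c} = \gamma_j \abs{x}$ to obtain $\sum_{S \subset \set{j}^c} x_S = (1 - \pi_j - \gamma_j)\abs{x}$.

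For the inequality $\gamma_j \le \opi$, the key point is that every peer with profile $\set{j}^c$ possesses all chunks except $j$. Since the file is divided into $m \ge 2$ chunks, I can fix a chunk $k \ne j$; then each of the $x_{\set{j}^c}$ peers with profile $\set{j}^c$ contributes to $y_k$, so $x_{\set{j}^c} \le \sum_{S : k \in S} x_S = y_k = \pi_k \abs{x} \le \opi \abs{x}$. Dividing through by $\abs{x}$ gives $\gamma_j \le \pi_k \le \opi$.

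I do not expect a genuine obstacle here; the statement is essentially bookkeeping over $\cX$. The only two points that need a moment's care are (i) keeping the proper-subset sum $\sum_{S \subset \set{j}^c} x_S$ distinct from $\sum_{S : j \notin S} x_S$, the two differing exactly by the term $x_{\set{j}^c}$, and (ii) invoking $m \ge 2$ so that a witness chunk $k \ne j$ is available for the second bound.
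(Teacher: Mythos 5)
Your proposal is correct and follows essentially the same route as the paper: the equality comes from splitting $\sum_{S \subseteq \set{j}^c} x_S = \abs{x} - y_j$ into the proper-subset sum plus $x_{\set{j}^c}$, and the inequality comes from noting that every peer with profile $\set{j}^c$ holds some chunk $k \ne j$, so $x_{\set{j}^c} \le y_k \le \oy$. The only cosmetic difference is that the paper also records the degenerate case $\abs{x}=0$ (where $\gamma_j = \opi = 0$) before dividing by $\abs{x}$, a detail worth keeping in mind but not a gap in your argument.
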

\ieeeproof{
From the definition of the total number of peers $\abs{x} = \sum_{S \subset [m]}x_S$ and the number of peers $y_j = \sum_{S: j \in S}x_S$ with chunk $j$, we get
\EQ{
|x| - y_j = \sum_{S \subseteq \set{j}^c}x_S = \sum_{S \subset \set{j}^c}x_S +x_{\set{j}^c}.
}
%Dividing both sides of the above equation by the number of peers $\abs{x}$, for $\abs{x} > 0$,
%we can write $1-\pi_j = $
In terms of $\gamma_j = {x_{\set{j}^c}}/{\abs{x}}$ and $\pi_j = y_j/\abs{x}$,
we can write $\sum_{S \subset \set{j}^c}x_S = (1 - \pi_j - \gamma_j)\abs{x}$.
We also observe that
\eqn{
\label{eqn:FracPeer}
x_{\set{j}^c} &\le \sum_{S: i \in S, i \neq j}x_S = y_i\indicator{i \neq j} \le \oy.
}
When there are no peers in the system, i.e. $\abs{x} = 0$, we have $\gamma_j = \opi = 0$.
For $\abs{x} > 0$, dividing both sides of the above equation by the number of peers $\abs{x}$,
we get that $\gamma_j \le \opi$.
}

\begin{proposition}
\label{prop:MeanDriftUB1}
An upper bound on the mean drift from any state $x$ such that $D_T(x) \neq \emptyset$ is
\eq{
QV(x) \le &\lambda C_1- \sum_{j \notin \cM(x)}\frac{R_j}{m}\Big[U_{11,j}(x)(1-\pi_j)\\
&+(1-\pi_j-\gamma_j)U_{21}(x) -\gamma_jmU_{22}(x)\Big].
}
\end{proposition}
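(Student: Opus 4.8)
The plan is to assemble the claimed bound directly from the mean-drift decomposition \eqref{eq:DeltaV}, feeding in the three lemmas already proved. The arrival term contributes exactly $\lambda C_1$, as recorded just after \eqref{eq:DeltaV}, so it remains to bound the sum over transitions $\cT_{S,j}(x)$. The crucial structural input is that $D_T(x)\neq\emptyset$ forces $D_T(x)=\cM(x)$ (the observation on the suppressed set), so a transition $\cT_{S,j}(x)$ has positive rate only for $j\notin\cM(x)$; consequently only the first inequality of Lemma~\ref{lem:UpperBoundLyapunovDiff} is needed, giving
\[
V(\cT_{S,j}(x))-V(x)\le -U_{11,j}(x)-U_{21}(x)\indicator{S\subset\set{j}^c}+U_{22}(x)\indicator{S=\set{j}^c}.
\]

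I would then split the inner sum $\sum_{S:j\notin S}=\sum_{S\subseteq\set{j}^c}$ into the proper subsets $S\subset\set{j}^c$ (for which $\abs{S}\le m-2$, so the chunk-addition transition applies) and the single set $S=\set{j}^c$ (the departure transition). For $j\notin\cM(x)$ we have $\oy\ge y_j+1$, hence $U_{11,j}(x)=2(\oy-y_j)-1\ge1$; together with $U_{21}(x)\ge0$ this makes the increment $-U_{11,j}(x)-U_{21}(x)$ strictly negative, so pairing it with the \emph{lower} bound $Q(x,\cT_{S,j}(x))\ge\frac{x_S}{m\abs{x}}R_j$ from Lemma~\ref{lem:LowerBoundTransitionRate} preserves the inequality. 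Summing over proper subsets and invoking Lemma~\ref{lem:FracPeerWOSC} ($\sum_{S\subset\set{j}^c}x_S=(1-\pi_j-\gamma_j)\abs{x}$) gives a contribution of at most $-\frac{R_j}{m}(1-\pi_j-\gamma_j)\bigl(U_{11,j}(x)+U_{21}(x)\bigr)$ from these sets. For $S=\set{j}^c$ the rate is exactly $\gamma_jR_j$ by \eqref{eqn:EqualityTransitionRate}, so that term contributes at most $\gamma_jR_j\bigl(-U_{11,j}(x)+U_{22}(x)\bigr)$.

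Adding the two pieces for a fixed $j\notin\cM(x)$ produces $-\frac{R_j}{m}(1-\pi_j-\gamma_j)U_{11,j}(x)-\gamma_jR_jU_{11,j}(x)-\frac{R_j}{m}(1-\pi_j-\gamma_j)U_{21}(x)+\gamma_jR_jU_{22}(x)$. To recover the clean coefficient $(1-\pi_j)$ in front of $U_{11,j}(x)$, I would weaken the departure contribution via $\gamma_jR_jU_{11,j}(x)\ge\frac{\gamma_jR_j}{m}U_{11,j}(x)$ (valid since $U_{11,j}(x)\ge1>0$ and $m\ge1$), so that the two $U_{11,j}$ terms combine into $-\frac{R_j}{m}(1-\pi_j)U_{11,j}(x)$. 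Summing over $j\notin\cM(x)$ and adding $\lambda C_1$ yields exactly the stated inequality. The case split and the $x_S$-summation are routine; the only delicate point is the sign bookkeeping --- one must check that each Lyapunov increment paired with the lower bound on the rate is genuinely nonpositive (which is precisely where $j\notin\cM(x)$, and hence $U_{11,j}(x)\ge1$, is used) and that the factor-of-$m$ slackening on $U_{11,j}$ is applied in the loosening direction. I expect that to be the main, albeit minor, obstacle; the rest is assembly of the preceding lemmas.
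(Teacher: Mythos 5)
Your proposal is correct and follows essentially the same route as the paper's proof: the same decomposition into the arrival term and the sums over $S\subset\set{j}^c$ and $S=\set{j}^c$ for $j\notin\cM(x)$, using Lemma~\ref{lem:UpperBoundLyapunovDiff}, the rate bounds of Lemma~\ref{lem:LowerBoundTransitionRate}, and the population split of Lemma~\ref{lem:FracPeerWOSC}. Your factor-of-$m$ weakening of the departure term's $U_{11,j}$ contribution is exactly the paper's step of dropping the nonnegative $(m-1)\gamma_j$ from the combined coefficient $(1-\pi_j)+(m-1)\gamma_j$, so the two arguments coincide.
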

\ieeeproof{
When $D_T(x) \neq \emptyset$, then transitions from state $x$ to $x' = \cT_{S,j}(x)$ are possible only for chunks $j \notin \cM(x)$.
Hence,
\eq{
QV(x) =& \lambda C_1 + \sum_{j \notin \cM(x)}\bigg(\sum_{S \subset \set{j}^c} + \sum_{S = \set{j}^c}\bigg)Q(x,x')\\
&\bigg(V_1(x')-V_1(x) + V_2(x') - V_2(x)\bigg).
}
We have $V_1(x') - V_1(x) = - U_{11,j}(x) \le -1$ for all $j \notin \cM(x)$ from Corollary~\ref{cor:NegativeDiff}.
For $S \subset \set{j}^c$,
the difference $V_2(x') - V_2(x) = U_{21}(x) = -C_2\indicator{M > r} < 0$ from Lemma~\ref{lem:UpperBoundLyapunovDiff}
and the transitions rate $Q(x,x')$ is lower bounded by $\frac{R_j}{m}$ from~\eqref{eqn:LowerBoundTransitionRate}.
Hence,
\EQ{
Q(x,x')(V(x') - V(x)) \le -R_j\frac{x_S}{m\abs{x}}(U_{11,j}(x) + U_{21}(x)).
}
For $S = \set{j}^c$, the transition rate $Q(x, x') = R_j$ from~\eqref{eqn:EqualityTransitionRate},
and $V_2(x') - V_2(x) \le U_{22}(x) = C_2(m-1)\indicator{M+m-1 \ge r}$ from Lemma~\ref{lem:UpperBoundLyapunovDiff}.
Therefore,
\EQ{
Q(x,x')(V(x') - V(x)) \le -R_j\frac{x_S}{\abs{x}}(U_{11,j}(x)-U_{22}(x)).
}
Summing the above upper bounds for all $S \subseteq \set{j}^c$,
we get the following upper bound on the mean drift
\eq{
 QV(x) \le &\lambda C_1 - \sum_{j \notin \cM(x)}\frac{R_j}{m}\Big[(U_{11,j}(x) + U_{21}(x))\sum_{S \subset \set{j}^c}\frac{x_S}{\abs{x}}\\
&+m(U_{11,j}(x)-U_{22}(x))\sum_{S = \set{j}^c}\frac{x_S}{\abs{x}}\Big].
}
Substituting $\gamma_j$ defined in Lemma~\ref{lem:FracPeerWOSC} and $\pi_j$ defined in equation~\eqref{eq:pi_j},
in the above upper bound on mean drift,
and using the fact that $(1-\pi_j) + (m-1)\gamma_j \ge (1-\pi_j)$,
we get the result.
}

\begin{proposition}
\label{prop:MeanDriftUB2}
An upper bound on the mean drift from any state $x$ such that $D_T(x) = \emptyset$ is
\eq{
QV(x) &\le \lambda C_1- \sum_{j \in [m]}\frac{R_j}{m}\Big[U_{21}(x)(1-\pi_j-\gamma_j)-m \gamma_jU_{22}(x)\\
&+ (1-\pi_j)\big(U_{11,j}(x)\indicator{j \notin \cM(x)} + U_{12,j}(x)\indicator{j \in \cM(x)}\big)\Big].
}
\end{proposition}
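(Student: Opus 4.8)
The plan is to repeat the argument of the proof of Proposition~\ref{prop:MeanDriftUB1} almost verbatim, the essential new feature being that when $D_T(x) = \emptyset$ a transition to $\cT_{S,j}(x)$ is allowed for \emph{every} $j \in [m]$, so the outer sum over chunks no longer excludes $\cM(x)$ and must instead be split according to whether $j \notin \cM(x)$ or $j \in \cM(x)$.

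First I would start from the drift expression~\eqref{eq:DeltaV}, separate off the arrival term (which contributes exactly $\lambda C_1$, since an arrival changes only $\abs{x}$), and write
\eq{
QV(x) = \lambda C_1 + \sum_{j \in [m]} \Big( \sum_{S \subset \set{j}^c} + \sum_{S = \set{j}^c} \Big) Q(x,x') \big( V_1(x') - V_1(x) + V_2(x') - V_2(x) \big),
}
with $x' = \cT_{S,j}(x)$ and $V = V_1 + V_2$ as in Lemma~\ref{lem:UpperBoundLyapunovDiff}. For each inner term I would apply Lemma~\ref{lem:UpperBoundLyapunovDiff} to bound $V(x') - V(x)$ from above by $-U_{11,j}(x) - U_{21}(x)\indicator{S \subset \set{j}^c} + U_{22}(x)\indicator{S = \set{j}^c}$ when $j \notin \cM(x)$ and by $-U_{12,j}(x) - U_{21}(x)\indicator{S \subset \set{j}^c} + U_{22}(x)\indicator{S = \set{j}^c}$ when $j \in \cM(x)$, and apply Lemma~\ref{lem:LowerBoundTransitionRate} to replace $Q(x,x')$ by its lower bound $\frac{x_S}{m\abs{x}}R_j$ when $S \subset \set{j}^c$ and by its exact value $\frac{x_S}{\abs{x}}R_j$ when $S = \set{j}^c$.

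The one step that needs care is justifying the use of the rate lower bound for $S \subset \set{j}^c$: this is legitimate only because the corresponding Lyapunov difference is strictly negative, so that shrinking $Q(x,x')$ enlarges the (negative) product. For $j \notin \cM(x)$ this holds because $U_{11,j}(x) = 2(\oy - y_j) - 1 \ge 1 > 0$ and $U_{21}(x) \ge 0$; for $j \in \cM(x)$ it follows from Corollary~\ref{cor:NegativeDiff}, which gives $U_{12,j}(x) \ge C_1 - (2T-1)(m-1) > 0$ under the standing constraint $C_1 > (2T-1)(m-1)$. For $S = \set{j}^c$ the rate is exact, so no sign information is needed even though the $U_{22}(x)$ term there is nonnegative. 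After that, I would sum each bound over $S \subseteq \set{j}^c$ using Lemma~\ref{lem:FracPeerWOSC}, namely $\sum_{S \subset \set{j}^c} x_S/\abs{x} = 1 - \pi_j - \gamma_j$ and $x_{\set{j}^c}/\abs{x} = \gamma_j$, collect the coefficient of $U_{11,j}(x)$ (resp. $U_{12,j}(x)$) into $(1 - \pi_j) + (m-1)\gamma_j$, and finally discard the nonnegative $(m-1)\gamma_j$ term using positivity of $U_{11,j}(x)$ and $U_{12,j}(x)$ to obtain the stated coefficient $1 - \pi_j$. Summing over all $j \in [m]$ and merging the $j \notin \cM(x)$ and $j \in \cM(x)$ contributions with an indicator yields the claimed bound.

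In short, the proof is essentially the bookkeeping already carried out for Proposition~\ref{prop:MeanDriftUB1}, now run over the full index set $[m]$; the only genuinely new point is the $j \in \cM(x)$ contribution, where receiving a mode chunk raises $\oy$ and could in principle flip the sign of $V_1(x') - V_1(x)$ — this is exactly what Corollary~\ref{cor:NegativeDiff} and the choice $C_1 > (2T-1)(m-1)$ are designed to prevent. It is also worth recording that $D_T(x) = \emptyset$ does not rule out $\cM(x) = [m]$, in which case the $j \notin \cM(x)$ part of the split is vacuous, but the final bound is unchanged because it is stated with indicators ranging over all $j \in [m]$.
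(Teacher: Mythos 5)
Your proposal is correct and follows essentially the same route as the paper's own proof: split the chunk sum into $j \notin \cM(x)$ and $j \in \cM(x)$, bound the Lyapunov differences via Lemma~\ref{lem:UpperBoundLyapunovDiff} and Corollary~\ref{cor:NegativeDiff}, use the rate bounds of Lemma~\ref{lem:LowerBoundTransitionRate} (valid since the differences are negative, with $U_{12,j}>0$ guaranteed by $C_1 > (2T-1)(m-1)$), and sum over $S$ using Lemma~\ref{lem:FracPeerWOSC} together with $(1-\pi_j)+(m-1)\gamma_j \ge (1-\pi_j)$. Your explicit justification of why the transition-rate lower bound may be substituted is a point the paper leaves implicit, but it is the same argument.
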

\ieeeproof{
When $D_T(x) = \emptyset$, then transitions from state $x$ to $x' = \cT_{S,j}(x)$ are possibly for all chunks $j \in [m]$.
We have $V_1(x') - V_1(x) = - U_{11,j}(x) \le -1$ for all $j \notin \cM(x)$ from Corollary~\ref{cor:NegativeDiff}.
From similar arguments in the proof of Proposition~\ref{prop:MeanDriftUB1}, we have
\eq{
&\sum_{j \notin \cM(x)}Q(x,x')(V(x')-V(x)) \le \\
&-\sum_{j \notin \cM(x)}\frac{R_j}{m}\Big[U_{11,j}(x)(1 - \pi_j + (m-1)\gamma_j) \\
&+ U_{21}(x)(1-\pi_j-\gamma_j) - m U_{22}(x)\gamma_j\Big].
}

Since $C_1 > (2T-1)(m-1)$, we can similarly get
\eq{
&\sum_{j \in \cM(x)}Q(x,x')(V(x')-V(x)) \le \\
&-\sum_{j \in \cM(x)}\frac{R_j}{m}\Big[U_{12,j}(x)(1 - \pi_j + (m-1)\gamma_j) \\
&+ U_{21}(x)(1-\pi_j-\gamma_j) - m U_{22}(x)\gamma_j\Big].
}
Result follows from summing both the upper bounds.
}

When $\cM(x) \subset [m]$, we have $\opi > \upi$ and we denote the set of least frequent chunks by $\uJ(x) \triangleq \set{j \notin \cM(x): y_j = \uy}$,
and the set of most frequent chunks by $\oJ(x) \triangleq \set{j \notin \cM(x): y_j = \uy}$.
We let $\uj \in \uJ(x)$ be one of the least frequent chunks, and $\oj \in \oJ(x)$ be one of the most frequent chunks.
When $\cM(x) = [m]$, all chunks are equally frequent.
\begin{lemma}
\label{lem:largeN}
Let $K_1 >0, K_2 < 2$ be constants.
For each $\epsilon > 0$ there exists an $N(K_1,K_2,\epsilon) \in \R_+$,
such that if $ \oy  \ge N$,
then for $\cM(x) \subset [m]$, we have
\eq{
C_1 \lambda - K_1 \sum_{j \notin \cM(x)}R_j(1-\pi_j)(2(\oy - y_j)-K_2) \le - \epsilon.
}
\end{lemma}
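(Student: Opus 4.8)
The plan is to exploit the fact that, once $\cM(x) \subset [m]$, the sum $\sum_{j \notin \cM(x)} R_j(1-\pi_j)(2(\oy - y_j) - K_2)$ grows at least linearly in $\oy$; then $-K_1$ times this sum eventually dominates the fixed quantity $C_1\lambda$ and forces the left-hand side below $-\epsilon$ for all $\oy$ beyond a suitable threshold $N$.

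First I would check that every summand is nonnegative. Indeed $R_j = U + \mu y_j > 0$ and $1 - \pi_j \ge 0$; moreover $j \notin \cM(x)$ forces $y_j \le \oy - 1$ (the $y_i$ are integers and $\oy$ is their maximum), so $2(\oy - y_j) - K_2 \ge 2 - K_2 > 0$ by the hypothesis $K_2 < 2$. Consequently the sum is at least its term indexed by the least-frequent chunk $\uj$, which lies outside $\cM(x)$ because $\cM(x) \subset [m]$ implies $\upi < \opi$. For that term $\pi_{\uj} = \upi$, so Lemma~\ref{lemma:pi-bound} gives $1 - \pi_{\uj} \ge \tfrac1m$, hence
\[
\sum_{j \notin \cM(x)} R_j(1-\pi_j)\big(2(\oy - y_j) - K_2\big) \ge \frac{1}{m}(U + \mu\uy)\big(2(\oy - \uy) - K_2\big).
\]

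Next I would bound $g(\uy) \triangleq (U + \mu\uy)\big(2(\oy - \uy) - K_2\big)$ from below uniformly over the admissible range $\uy \in \{0,1,\dots,\oy-1\}$. As a function of $\uy$ this is a downward-opening quadratic, so its minimum on an interval is attained at an endpoint; the endpoint values $g(0) = U(2\oy - K_2)$ and $g(\oy-1) = (U + \mu(\oy-1))(2 - K_2)$ are affine in $\oy$ with strictly positive slopes $2U$ and $\mu(2-K_2)$. Hence there are constants $a > 0$ and $b$, depending only on $U$, $\mu$, $K_2$, with $g(\uy) \ge a\oy - b$ throughout that range. Substituting gives
\[
C_1\lambda - K_1 \sum_{j \notin \cM(x)} R_j(1-\pi_j)\big(2(\oy - y_j) - K_2\big) \le C_1\lambda + \frac{K_1 b}{m} - \frac{K_1 a}{m}\,\oy,
\]
and the right-hand side is at most $-\epsilon$ as soon as $\oy \ge N(K_1,K_2,\epsilon) \triangleq \dfrac{m(C_1\lambda + \epsilon) + \lvert K_1 b\rvert}{K_1 a}$, which is the claimed threshold.

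The only step that is not bookkeeping is recognizing that the retained single term grows linearly in $\oy$; this is precisely the concave-quadratic endpoint argument above, and it relies essentially on $\uj \notin \cM(x)$ (so that $\oy - \uy \ge 1$ and $R_{\uj} \ge U$) together with the bound $1 - \upi \ge \tfrac1m$ from Lemma~\ref{lemma:pi-bound}. All remaining manipulations are elementary.
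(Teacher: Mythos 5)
Your proof is correct and follows essentially the same route as the paper's: retain only the least-frequent-chunk term (valid since all summands are nonnegative and $\uj \notin \cM(x)$ when $\cM(x) \subset [m]$), apply $1-\upi \ge \tfrac{1}{m}$ from Lemma~\ref{lemma:pi-bound}, and bound the resulting concave quadratic in $\uy$ by its endpoint values, which grow linearly in $\oy$. The only differences are cosmetic: the paper parametrizes $\uy = \eta\oy$ and minimizes over $\eta$, whereas you work directly in $\uy$ and make the nonnegativity of the discarded terms explicit.
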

\ieeeproof{
%Recall that $R_j(x)  = U + \mu \pi_j |x| \geq  \mu \upi |x|$. %, and $\pi_j \geq \upi$ for each $j \in [m]$.
%Then, we can bound $1- \pi_j \geq 1 - \upi$
% First we will bound $(1-\upi)$ and then we show that negative part scales with $\opi|x|$.\\
%We can lower bound the summation of positive numbers over $[m] \setminus I(x)$,  by taking a single term.
Lower bounding the summation of positive terms over the non-empty set $[m] \setminus \cM(x)$ by a single term corresponding to the least popular chunk $\uj$,
and lower bounding $1-\upi$ by $\frac{1}{m}$ from Lemma~\ref{lemma:pi-bound},
we can upper bound the LHS of the above equation by
\EQ{
C_1\lambda - \frac{K_1}{m}R_{\uj}(2(\oy - y_{\uj})- K_2).
}
To upper bound the above equation,
we define $\eta$ as the ratio of number of peers with the least and the most popular chunks.
That is, we can write $\uy = \eta \oy$ where $\eta \in \left[0, 1- \frac{1}{\oy} \right]$ since $\uy \le \oy-1$.
Since $R_{\uj} = U + \uy\mu = U + \eta\oy\mu$, we can write
\eq{
&R_{\uj}(2(\oy - y_{\uj})-K_2) = (U+\eta\oy \mu )(2\oy(1 - \eta)-K_2)  \\
&= -K_2U + 2U\oy (1-\eta) - K_2\eta \oy\mu+ 2 \oy^2 \mu \eta  (1-\eta).
}
Let us denote the above quadratic expression in $\eta$ by $g(\eta)$.
We can check that $g''(\eta)   = -4 \oy^2 \mu < 0$.
Hence, the function $g(\eta)$ is strictly concave and quadratic in $\eta$, with a unique maximum.
This function attains minimum at the boundary values of $\eta$,
and we can lower bound $g(\eta)$ as
\eq{
&g(\eta) \geq \min\set{g(\eta): \eta \in [0, 1-  \frac{1}{\oy}]} = g(0) \wedge g(1 - \frac{1}{\oy})\\
&= \left[U (2 \oy -K_2) \wedge (2-K_2)(U + \mu (\oy -1)\right].
}
%\eq{
%g'(\eta)   &= - 2U \opi |x| -K_2 \opi |x| \mu + (1 - 2 \eta)2 \opi^2 |x|^2 \mu,  \\
%g''(\eta)   &= -4 \opi^2 |x|^2 \mu.
%}
%It follows that $g(\eta)$ is a concave function of $\eta$ and hence the minimum occurs at the boundaries.
%\eq{
%& \frac{K_1}{m} (U+\upi \mu |x| )(2(\opi - \upi)|x|-K_2)  \geq \\
%& \min \Big\{\frac{K_1}{m} U (2 \opi |x| -K_2) , \frac{K_1(2-K_2)}{m}\Big(U + \mu (\opi |x| -1)\Big)  \Big\}.
%}
The result follows since $C_1\lambda - \frac{K_1}{m}g(\eta) \le -\epsilon$ if $\oy \ge N$,
where we can choose $N$ to be
\EQ{
\max \set{\frac{1}{2}\left(\frac{C_1 \lambda +\epsilon}{\frac{K_1}{m}U} +K_2\right) ,\left( \frac{C_1 \lambda +\epsilon}{\frac{K_1}{m} (2-K_2)\mu} - \frac{U}{\mu} +1\right)}.
}
%Then,
%\[C_1 \lambda - K_1 \sum_{k \in I(x)} (1-\pi_k) (U+\pi_k \mu |x| )(2(\opi - \pi_k)|x|-K_2) < - \epsilon.\]
% \begin{align*}
%   \frac{K}{m}(U+\upi \mu |x| )&(2(\opi - \upi)|x|-1)  \\
%   & \geq \min \Big\{ \frac{K_1}{m}U (2 \opi |x| -K_2) ,\frac{K_1(2-K_2)}{m} U + \mu (\opi |x| -1)  \Big\} \\
%   & \geq \min \Big\{ U (2 N -1) , U + \mu ( N  -1)  \Big\}. \\
% \end{align*}
% As the minimum value scales with $N$, there  should exists a $N$ such that,
}
% \red{
% THIS COROLLARY ALSO NOT NEEDED.
% \begin{corollary}
% \label{cor:scaling}
% Let $K_1 >0, K_2 < 2$ be constants, $\opi(x) \ge \delta$, and $\cM(x) \subset [m]$.
% Then, for each $\epsilon > 0$, we can find an $L$ such that when $|x| \ge L$,
% \EQ{
%   % C_1 \lambda - (U+\upi \mu |x| )(2(\opi - \upi)|x|-1) < - \epsilon.
% C_1 \lambda - K_1 \sum_{j \notin \cM(x)} R_j(1-\pi_j) (2(\opi - \pi_j)|x|-K_2) \le - \epsilon.
% }
% \end{corollary}
% \ieeeproof{
% Fix $\epsilon > 0$, we choose the $N$ from Lemma~\ref{lem:largeN} and $L = \frac{N}{\delta}$.
% Then $\opi|x| \ge N$, and the inequality holds.
% %By taking $L = \frac{N}{\delta}$, we have
% %\eq{
% %$\opi|x| \geq \delta |x| \geq N$.
% %}
% }
% }

\begin{theorem}
\label{thm:stability}
The stability region of Mode-Suppression (MS) is $\lambda > 0$ for any finite $T < \infty$, if $m \ge 2, \mu > 0$ and $U > 0$.
\end{theorem}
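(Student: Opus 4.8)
The plan is to verify the Foster--Lyapunov drift condition stated above for the Lyapunov function $V$ of~\eqref{eqn:LyapunovFunction}, taking $C_1,C_2,M$ as in the hypotheses for an arbitrary target $\epsilon>0$ (so in particular $\oy(x)\ge r(x)/m\ge M/m$ clears the threshold of Lemma~\ref{lem:largeN} as soon as $r(x)\ge M$, via $M>mN_{21}$), and taking the finite set to be $F\triangleq\set{x\in\cX:\abs{x}<N_\star}$ for a radius $N_\star$ fixed \emph{last}, large enough that $N_\star\ge 2M$ and $N_\star\ge C_2(U+\mu(M+m))(M+m)/\epsilon$. Then $F$ is finite because $\abs{x}<N_\star$ bounds each coordinate $x_S$; the chain is irreducible and non-explosive since the seed can always deliver the chunk a peer still needs; and $QV$ is finite on $F$. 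It therefore suffices to show $QV(x)\le-\epsilon$ for every $x$ with $\abs{x}\ge N_\star$, which we do by splitting on $r(x)$ and invoking Corollary~\ref{cor:NegativeDiff} (which uses $C_1>(2T-1)(m-1)$ to make $U_{11,j},U_{12,j}\ge0$) together with Propositions~\ref{prop:MeanDriftUB1} and~\ref{prop:MeanDriftUB2}.

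\emph{Case $r(x)\ge M$.} Here $U_{21}(x)=C_2\indicator{M>r}=0$, so the proposition bounds reduce to $QV(x)\le\lambda C_1-\sum_{j\notin\cM(x)}\tfrac{R_j}{m}(1-\pi_j)U_{11,j}(x)+U_{22}(x)\sum_jR_j\gamma_j$ after discarding the nonnegative contributions indexed by $j\in\cM(x)$ (present only when $D_T(x)=\emptyset$, and then $U_{12,j}\ge C_1-(2T-1)(m-1)>0$). The last term vanishes when $r(x)\ge M+m$; otherwise $M\le r(x)<M+m$ caps each $y_j$ and $\oy$ by $M+m-1$ and the number of peers holding $m-1$ chunks by $(M+m-1)/(m-1)$, so it is at most $C_2(m-1)(U+\mu(M+m))\tfrac{(M+m)/(m-1)}{\abs{x}}\le\epsilon$ because $\abs{x}\ge N_\star$. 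Applying Lemma~\ref{lem:largeN} (with $K_1=1/m$, $K_2=1$, target $-2\epsilon$) to the surviving negative sum then gives $QV(x)\le-2\epsilon+\epsilon=-\epsilon$. The one degenerate sub-case, $\cM(x)=[m]$ (possible only when $D_T(x)=\emptyset$), has no index $j\notin\cM(x)$; but then each $y_j=r(x)/m$ is of order $M$, so $R_j\ge\mu M/m$, $1-\pi_j\ge1/m$ by Lemma~\ref{lemma:pi-bound}, and the retained sum $\sum_j\tfrac{R_j}{m}(1-\pi_j)(C_1-(m-1))$ already exceeds $\lambda C_1+2\epsilon$ once $M$ beats a further explicit constant --- the second entry of the maximum in the constraint on $M$.

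\emph{Case $r(x)<M$} (and $\abs{x}\ge N_\star$). Every non-empty peer holds at least one chunk, so the number of non-empty peers is at most $r(x)<M$; hence, by Lemma~\ref{lem:FracPeerWOSC} and $\emptyset\subsetneq\set{j}^c$, $1-\pi_j-\gamma_j\ge x_\emptyset/\abs{x}\ge 1-M/N_\star\ge1/2$ for every $j$. The departure error is again controlled, $U_{22}(x)\sum_jR_j\gamma_j\le C_2(U+\mu M)M/\abs{x}\le\epsilon$, since $r(x)<M$ caps $y_j$ and the $(m-1)$-chunk count. Using $R_j\ge U$ and that the $U_{21}$-term carries coefficient $C_2$, Proposition~\ref{prop:MeanDriftUB1} (or~\ref{prop:MeanDriftUB2}) gives $QV(x)\le\lambda C_1-C_2\sum_j\tfrac{R_j}{m}(1-\pi_j-\gamma_j)+\epsilon\le\lambda C_1-\tfrac{C_2U}{2m}+\epsilon$. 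Since $C_2\ge 2m^2(C_1\lambda+\epsilon)/U$ forces $\tfrac{C_2U}{2m}\ge m(C_1\lambda+\epsilon)\ge C_1\lambda+2\epsilon$ for $m\ge2$, this is at most $\lambda C_1-(C_1\lambda+2\epsilon)+\epsilon=-\epsilon$. This exhausts the cases, so the Foster--Lyapunov criterion applies and $X(t)$ is positive recurrent for all $\lambda>0$.

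The step I expect to be the main obstacle is controlling the departure term $U_{22}$: a peer with $m-1$ chunks that completes and leaves can raise the seed-penalty $C_2(M-r)^+$ by up to $C_2(m-1)$, and since $C_2$ is forced large this cannot be dominated state-by-state. The remedy is the two-pronged observation that such a jump occurs only while $r(x)$ is moderate --- so that $(m-1)$-chunk peers are scarce and, once $r(x)\ge M+m$, the penalty has already hit $0$ --- combined with choosing the radius $N_\star$ large enough that the corresponding rate, of order $1/\abs{x}$, is negligible. The subtle point is the order of quantification: pick $C_1$, then $C_2$ (in terms of $C_1,\lambda,\epsilon$), then $M$ (in terms of $C_1,C_2,\lambda,\mu,U$ and Lemma~\ref{lem:largeN}), and only then $N_\star$ (in terms of all of these), so that nothing is asked to dominate a quantity that in turn grows with it. Everything else --- the arrival drift $\lambda C_1$, the strictly negative $V_1$-drift from Corollary~\ref{cor:NegativeDiff}, and the algebra behind Propositions~\ref{prop:MeanDriftUB1}--\ref{prop:MeanDriftUB2} --- is routine bookkeeping.
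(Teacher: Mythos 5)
Your proof is sound and reaches the required drift bound, but by a genuinely different decomposition than the paper's. The paper partitions the state space by mode frequency and mode count --- $\cR_1=\set{\opi\ge\delta}$, $\cR_2=\set{\opi<\delta,\ \oy\ge M/m}$, $\cR_3=\set{\opi<\delta,\ \oy<M/m}$, each subdivided according to whether $D_T(x)=\emptyset$ and $\cM(x)=[m]$ --- and tames the troublesome departure term $U_{22}$ \emph{multiplicatively}, via Lemma~\ref{lem:FracPeerWOSC}'s bound $\gamma_j\le\opi<\delta$ together with a $\delta$ chosen small against $C_1,C_2,m,T$; in $\cR_1$ it instead uses that $\oy=\opi\abs{x}\ge\delta\abs{x}$ is large outside a finite set, so $r\ge\oy$ kills both $U_{21}$ and $U_{22}$. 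You split on $r(x)$ versus $M$ with a single ball $\set{\abs{x}<N_\star}$ as the exceptional set, and control $U_{22}$ \emph{additively}: whenever its indicator is active, $r(x)\le M+m-1$ caps every $y_j$, every $R_j$, and the number of $(m-1)$-chunk peers, so the aggregate departure contribution is of order $1/\abs{x}$ and is absorbed into $\epsilon$ once $N_\star$ is large; the negative drift then comes from Lemma~\ref{lem:largeN} (and the $\cM(x)=[m]$ computation) when $r\ge M$, and from the $C_2\,U_{21}$ term with $1-\pi_j-\gamma_j\ge x_\emptyset/\abs{x}\ge\tfrac12$ when $r<M$. This buys a cleaner exceptional set and avoids the paper's delicate $\delta$-constraints entirely, at the cost of one extra constant $N_\star$ and heavier reliance on the explicit form of $U_{22}$. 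One bookkeeping caveat: your uses of Lemma~\ref{lem:largeN} and of the $\cM(x)=[m]$ subcase need margin $2\epsilon$, while the paper's thresholds $N_{21},N_{23}$ are computed with margin $\epsilon$, so you cannot literally take ``$C_1,C_2,M$ as in the hypotheses''; you must enlarge $M$ to exceed $m\,N(\tfrac1m,1,2\epsilon)$ and the $2\epsilon$-analogue of $N_{23}$ --- harmless given your declared order of fixing $C_1$, then $C_2$, then $M$, then $N_\star$.
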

\ieeeproof{
To prove the positive recurrence of the continuous time Markov chain $X(t)$, we employ the Foster-Lyapunov criteria~\cite{Meyer-Tweedie}.
We consider the Lyapunov function defined in~\eqref{eqn:LyapunovFunction}.

For any $\delta \in (0,1)$,
we can partition the state space into following three regions,
\meq{2}{
&\cR_1 = \set{\opi \ge  \delta},&& \cR_2 = \set{\opi < \delta, \oy \ge \frac{M}{m}},\\
&\cR_3 = \set{\opi < \delta, \oy < \frac{M}{m}}.&&
}
Let us choose a $\delta$ such that $\delta \leq \min \bigg\{ \left(1+\frac{1}{C_2m(m-1)} \right)^{-1}, \left( 1+ \frac{C_2m(m-1)}{C_1-m+1}\right)^{-1},$ $\left( 1+ \frac{C_2m(m-1)}{C_1-(2T-1)(m-1)}\right)^{-1},  \left( 3+ 2m(m-1)\right)^{-1}.\bigg\}$

For each $i \in [3]$, we can further partition each region $R_i$ into
\eq{
\cR_{i1} &= \set{x \in \cR_i: D_T(x) \neq \emptyset},\\
\cR_{i2} &=  \set{x \in \cR_i: D_T(x) = \emptyset, \cM(x) \subset [m]},\\
\cR_{i3} &= \set{x \in \cR_i: D_T(x) = \emptyset, \cM(x) = [m]}.
}
All these regions have countable number of states,
this is due to the fact that the number of peers without any chunks can be arbitrarily large for any state $x$.
We will prove that in each region $\cR_{ij}$ where $i \in \{1,2,3\}$ and $j \in \{1,2\}$,
the mean drift $QV(x) \le - \epsilon$ for all states $x \in R_{ij} \setminus F_{ij}$ for  some finite set $F_{ij}$ dependent on $\epsilon$.

% \red{
% NOT NEEDED. \\
% For a state $x \in \cR_{i1}$, we can define $\uj \triangleq \arg\min\{j \notin \cM(x): \pi_j < \opi\}$.
% Then, from a state $x \in \cR_{i1}$
% the mean drift $QV(x)$ is upper bounded by
% \eqn{
% \label{eqn:case1}
% &\lambda C_1 -\frac{R_{\uj}}{m}(1-\upi)(2(\oy - \uy)-1) \\
% &-\sum_{j \notin \cM(x)}\frac{R_{j}}{m} \Big[(1-\upi-\gamma_{\uj})U_{21}(x)-m\gamma_jU_{22}(x)\Big]\nonumber.
% }
% For $D_T(x) = \emptyset$ and $\cM(x) = [m]$, we have $\opi = \upi$ and $\uj = \oj = \arg\max\{j \in [m]: \pi_j = \opi\}$.
% Hence, from state $x \in \cR_{i2}$, the mean drift $QV(x)$ is upper bounded by
% \eqn{
% \label{eqn:case2}
% &\lambda C_1 %-\frac{R_{\uj}}{m}(1-\upi)(2(\oy - \uy)-1)\indicator{\cM(x)\subset [m]} \\
% - \frac{R_{\oj}}{m}(1-\opi)(C_1- (2T-1)(m-1)) \\ %\indicator{\cM(x)=[m]} \nonumber\\
% &-\sum_{j \in [m]}\frac{R_j}{m} \Big[(1-\upi-\gamma_{j})U_{21}(x) - m \gamma_jU_{22}(x)\Big].\nonumber
% }
% }
%\begin{enumerate}
%\item
\textbf {Region $\cR_1$:}
We define the following finite set
\EQ{
F_{1} \triangleq \set{\delta |x| \leq (M+m-1)}.
}
Then, for any state $x \in \cR_1\cap F_{1}^c$, we have the fraction of peers with most popular chunk $\opi \ge \delta$ and the number of peers $\abs{x} > \frac{M+m-1}{\delta}$. %\wedge \frac{N_{11}}{\delta}$.
This implies that the number of peers with most popular chunk $\oy = \opi\abs{x} > (M+m-1)$. %\wedge N_{11}$.
Since the number of chunks in the system $r(x) \ge \oy$ as shown in~\eqref{eqn:NumChunksLB},
%we have
%\EQ{
%\cR_1\cap F_{1}^c \subseteq \set{r(x) > M+m-1}. %, \oy > N_{11}}.
%}
%In particular, this implies that
for any $x \in \cR_1 \cap F_{1}^c$, we have $U_{21}(x) = U_{22}(x) = 0$.
\begin{itemize}
\item {\bf Region $\cR_{11}\cup\cR_{12}$:}
%{\color{blue} sub-case: $D_T(x) \neq \phi$ }
We define the finite set $F_{11} \triangleq F_1\cup \set{\delta\abs{x} \le N_{11}}$,
where we choose $N_{11} \triangleq N(\frac{1}{m},1, \epsilon)$ from Lemma~\ref{lem:largeN}.
Then it follows that for any state $x \in (\cR_{11}\cup\cR_{12})\cap F_{11}^c \subseteq \cR_1\cap F_1^c$, the number of peers with most popular chunk $\oy > N_{11}$.
Since the upper bound function $U_{11,j}(x) = (2(\oy(x) -y_j(x))-1)$,
and $U_{12,j}(x) \ge 0$ for $x$ in $R_{12}$,
we can bound the mean drift from states $x \in (\cR_{11}\cup\cR_{12})\cap F_{11}^c$ as
\EQ{
QV(x) \le C_1\lambda - \sum_{j \notin \cM(x)}\frac{R_j}{m}(1-\pi_j)(2(\oy -y_j)-1) \le -\epsilon.
}
%{\color{blue} sub-case: $D_T(x) = \phi:$
%In this region $\cM(x) \neq [m]$ but $\oy - \uy < T$. This implies that $U_{12,j}(x) \geq C_1 - (2T-1)(m-1)$
%}
\item {\bf Region $\cR_{13}$:}
In this region $\cM(x) = [m]$ and the number of peers with each chunk $j$ is identical and hence $y_j = \oy$.
This implies that $U_{12,j}(x) = C_1 - m+1$ and $R_j = U + \mu \oy$ for each chunk $j \in [m]$.
Since we have chosen $C_1 > (2T-1)(m-1)$, it follows that $U_{12,j}(x) > 0$ for all chunks $j \in [m]$.
From Lemma~\ref{lemma:pi-bound}, we know that $1-\upi \ge \frac{1}{m}$, however for this case $\opi = \upi  = \pi_j$ for each chunk $j \in [m]$ and hence $1- \pi_j \ge \frac{1}{m}$ for each chunk $j$.
%\red{IS THE FOLLOWING  CHOICE NEEDED? We choose $C_1 > \big((2T-1)(m-1)\big)$ and }
We define the following threshold
\eq{
N_{13} &\triangleq \frac{m(C_1\lambda+\epsilon)}{\mu \Big(C_1-m+1 \Big)},
}
to define the finite set of states
\EQ{
F_{13} \triangleq F_{1} \cup\set{\delta|x| \le  N_{13}}.%  \vee 2Tm\delta}.
}
%\red{This $\abs{x} \le 2Tm$ is also NOT NEEDED. There is already a lower bound on $1-\upi$.}
It follows that for any state $x \in \cR_{13}\cap F_{13}^c \subseteq \cR_1 \cap F_1^c$,
we have
\EQ{
QV(x) \le C_1\lambda - \sum_{j \in [m]}\frac{R_j}{m}(1-\pi_j)U_{12,j}(x).
}
Since $U_{12,j}(x) = C_1-m+1$ for each chunk $j$, the fraction of peers $(1-\pi_j) \ge \frac{1}{m}$,
and $R_j = \mu \oy+ U \ge \mu \oy = \mu \opi\abs{x} \ge \mu \delta \abs{x} > \mu N_{13}$,
we can re-write the upper bound on mean drift as
\EQ{
QV(x) < C_1\lambda - \frac{\mu N_{13}(C_1-m+1)}{m} = -\epsilon.
}
\end{itemize}
%\item
\textbf {Region $\cR_2$:}
For any $x \in \cR_2$, we can upper bound the fraction of peers with most popular chunk $\opi(x) < \delta$,
and hence we can write for any chunk $j \in [m]$
\EQ{
\frac{1}{1-\pi_j(x)} \le \frac{1}{1-\opi(x)} < \frac{1}{1-\delta}.
}
In addition for any $x \in \cR_2$,  the number of peers with most popular chunk $\oy(x) \ge \frac{M}{m}$,
and we know that the fraction of peers $\gamma_j$ missing single chunk $j$ is upper bounded by the fraction of peers $\opi$ with most popular chunk from Lemma~\ref{lem:FracPeerWOSC}.
Combining the two results, we get $\gamma_j(x) \le \opi(x) < \delta$.
Recall that $U_{21}(x) \ge 0$ and $U_{22}(x) \le C_2(m-1)$,
then we can write the following upper bound
\EQ{
m\gamma_jU_{22}(x) \le \gamma_jC_2\frac{(1-\pi_j)}{(1-\pi_j)}m(m-1) \le (1-\pi_j)\frac{C_2m(m-1)\delta}{1-\delta}.
}
\begin{itemize}
\item {\bf Region $\cR_{21}$:}
From Proposition \ref{prop:MeanDriftUB1},
we can upper bound the mean drift $QV(x)$ from any state $x \in \cR_{21}$ by
\EQ{
\lambda C_1 - \sum_{j\notin \mathcal{M}(x)} \frac{R_j(1-\pi_j)}{m}\Big(2(\oy - y_j)-1 -  \frac{C_2 m(m-1)\delta}{1-\delta} \Big).
}
%Bounding $\gamma_j < \delta$ we get,
%
%\EQ{
%\lambda C_1 - \sum_{j\notin \mathcal{M}(x)} \frac{R_j(1-\pi_j)}{m}\Big( \big(2(\oy - y_j)-1\big) - \frac{\delta}{1-\delta}C_2 m(m-1) \Big)
%}
Choosing $\delta < (1+C_2m(m-1))^{-1}$,
we get $\frac{\delta}{1-\delta} C_2m(m-1) < 1$.
Therefore, we can apply Lemma~\ref{lem:largeN} for $K_1 = \frac{1}{m}$ and $K_2 = 1 + \frac{\delta}{1-\delta}C_2 m(m-1)$ for the threshold
\begin{equation}
\label{eqn:n21}
N_{21}  \triangleq N\left(\frac{1}{m},\frac{\delta}{1-\delta} C_2 m(m-1)+1,\epsilon\right).
\end{equation}
Choosing $M \ge m N_{21}$, we see that $\oy(x) \ge \frac{M}{m} \ge N_{21}$ for all $x \in \cR_{21}$,
and hence the mean drift $QV(x) \le -\epsilon$ for all such states $x$.

\item {\bf Region $\cR_{22}$:}
With the choice of threshold $N_{21}$ and $M \ge m N_{21}$,
the mean drift $QV(x) \le -\epsilon$ for all states $x \in \cR_{22}$, if we can show that
\EQ{
\sum_{j \in \cM(x)}\frac{R_j(1-\pi_j)}{m}\Big(U_{12,j}(x) %+U_{21}(x) \\
- \frac{\gamma_j}{1-\pi_j}%(U_{21}(x) +
U_{22}(x))\Big) \ge 0.
}
% \red{For all $j \in \cM(x)$, we have $2(\oy - y_i)+1 \le 2T-1$ for $i \notin \cM(x)$ and $ 2(\oy - y_i)+1 = 1$ for $i \in \cM(x)$.
% Therefore, we have $U_{12,j}(x) \ge C_1 - (2T-1)(m-1)$. }

To this end, we recall that $U_{12,j}(x) \ge  C_1 - (2T-1)(m-1) \ge 0$ to write
\eq{
&U_{12,j}(x) - \frac{\gamma_j}{1-\pi_j}U_{22}(x))\\
&\ge C_1-(2T-1)(m-1) -\frac{\delta}{1-\delta}C_2(m-1)
}
We see that the choice of $\delta < (1+ \frac{C_2m(m+1)}{C_1-(2T-1)(m-1)})^{-1}$ gives us the desired result.
%{\color{blue} When $D_T(x) = \emptyset$, }
\item {\bf Region $\cR_{23}$:}
In this region $\cM(x) = [m]$ and the number of peers with each chunk $j$ is identical and hence $y_j = \oy$.
This implies that $U_{12,j}(x) = C_1 - m+1$  for each chunk $j \in [m]$.
From Proposition \ref{prop:MeanDriftUB2},
we can upper bound the mean drift $QV(x)$ for all states $x \in \cR_{23}$ by
\EQ{
\lambda C_1 - \sum_{j\in [m]} \frac{R_j(1-\pi_j)}{m}\Big( \big(C_1 -m +1\big) - C_2 m(m-1) \frac{\delta}{1-\delta} \Big).
}
We can lower bound the contact rate $R_j(x) = U + \mu \oy(x) \ge \mu \frac{M}{m}$ for each state $x \in \cR_{23}$,
and $1-\pi_j \ge \frac{1}{m}$ from Lemma~\ref{lemma:pi-bound} for each state $x$, to get
\EQ{
QV(x) \le \lambda C_1 - \frac{\mu M}{m^2}\Big( \big(C_1 -m +1\big) - C_2 m(m-1) \frac{\delta}{1-\delta} \Big).
}
Since $\delta < \frac{(C_1 -m+1)}{C_2 m(m-1) + (C_1-m+1)}$, $\Big( \big(C_1 -m +1\big) - C_2 m(m-1) \frac{\delta}{1-\delta} \Big) > 0$. Let us define,
\begin{equation}
N_{23} \triangleq \frac{m^2(C_1\lambda+\epsilon)}{\mu\big( \big(C_1 -m +1\big) - C_2 m(m-1) \frac{\delta}{1-\delta} \big)}.
\label{eqn:n23}
\end{equation}
Choosing $M \ge N_{23}$, we see that the mean drift $QV(x) \le -\epsilon$.
\end{itemize}
%\item
\textbf {Region $\cR_3$:}
Since $U_{11,j}(x)$ and $U_{12,j}(x)$ are non-negative for all states $x$ and chunks $j \in [m]$,
we can upper bound the mean drift from any state $x$ as
\eq{
QV(x) \le& C_1\lambda - \frac{R_j(1-\pi_j)}{m}\Big(U_{21}(x) \\
&-\frac{\gamma_j}{1-\pi_j}(U_{21}(x)+ mU_{22}(x))\Big),
}
where $j \notin \cM(x)$ for $\cR_{31}$ and $j \in [m]$ for $x \in \cR_{32}$.

For any $x \in \cR_3$, we can upper bound the fraction of peers with most popular chunk $\opi < \delta$ and lower bound the number of peers with most popular chunk $\oy < \frac{M}{m}$.
From Lemma~\ref{lem:FracPeerWOSC} and the fact that $\pi(x) < \delta$,
it follows that $\gamma_j(x) \le \opi(x) < \delta$ as in Region $\cR_2$.
We can also write the following inequality from the fact that $\pi_j \le \opi < \delta$,
\EQ{
\frac{1}{1-\pi_j(x)} \le \frac{1}{1-\opi} < \frac{1}{1-\delta}.
}
Since the number of chunks in the system $r(x) = \sum_{j=1}^my_j \le m\oy < M$,
and therefore $r(x) < M+m-1$.
It implies that $U_{21}(x) = C_2$ and $U_{22}(x) = C_2(m-1)$.
Therefore for any $x \in \cR_3$, we can write
\eq{
&U_{21}(x) -\frac{\gamma_j}{1-\pi_j}(U_{21}(x)+ mU_{22}(x))\\
&\ge C_2\left(1 - \frac{\delta}{1-\delta}(1+m(m-1))\right).
}
Choosing $\delta \le (3+ 2m(m-1))^{-1}$,
we see that $1 - \frac{\delta}{1-\delta}(1+ m(m-1)) \ge \frac{1}{2}$.
From Lemma~\ref{lemma:pi-bound}, we have $1-\pi_j \ge \frac{1}{m}$ and the contact rate $R_j = U + \mu y_j  \ge U$,
and therefore
\EQ{
%U_{21}(x) -\frac{\gamma_j}{1-\pi_j}(U_{21}(x)+ mU_{22}(x)) \ge \frac{C_2}{2}.
QV(x) \le C_1\lambda - \frac{UC_2}{2m^2}.
}
Choosing $C_2 \ge \frac{2m^2(C_1\lambda + \epsilon)}{U}$,
we get that the mean drift $QV(x) \le -\epsilon$ for all $x \in \cR_3$.
}

\section{Scaling of Swarm Size and Sojourn Time}
Our next result is on the scaling properties of MS with respect to the peer arrival rate $\lambda$.
We use the following Kingman moment bound to prove the properties.
% \red{What is $X$?}
\begin{theorem}[Kingman moment bound~\cite{hajek2012}]
 \label{thm:kingman}
Let $X$ be a continuous-time, irreducible Markov process on a countable state space $\mathcal{X}$ with generator matrix $Q$. Suppose $V, f,$ and $g$ are nonnegative functions over the state space $\cX$,
and suppose $QV(x) \le -f(x) + g(x)$ for all $x \in \cX$.
In addition, suppose $X$ is positive recurrent, so that the means, $\bar{f} = \pi f$ and $\bar{g} = \pi g$ are well defined. Then $\bar{f} \le \bar{g}$.
\end{theorem}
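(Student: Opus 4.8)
The plan is to derive this from the regenerative structure of a positive recurrent Markov chain, since $QV\le -f+g$ is precisely the hypothesis needed for a Dynkin-formula argument over one regeneration cycle. First I would fix a reference state $x_0\in\cX$ and let $\tau$ be the first return time to $x_0$ for the chain started from $x_0$ (the first time the chain is back at $x_0$ after having left it). Positive recurrence gives $\E_{x_0}[\tau]\in(0,\infty)$ together with the cycle representation $\pi h=\E_{x_0}\!\big[\int_0^\tau h(X(s))\,ds\big]/\E_{x_0}[\tau]$ for every nonnegative $h$ on $\cX$. So it suffices to prove
\[
\E_{x_0}\!\Big[\int_0^\tau f(X(s))\,ds\Big]\ \le\ \E_{x_0}\!\Big[\int_0^\tau g(X(s))\,ds\Big],
\]
and if the right-hand side is $+\infty$ there is nothing to show, so I would assume it is finite.

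Next I would apply Dynkin's formula to $V$. If $V$ were bounded this is immediate: $M_t\triangleq V(X(t))-V(X(0))-\int_0^t QV(X(s))\,ds$ is a martingale, optional stopping at $\tau$ gives $\E_{x_0}[V(X(\tau))]-V(x_0)=\E_{x_0}\!\big[\int_0^\tau QV(X(s))\,ds\big]$, the left side vanishes because $X(\tau)=x_0$, and $QV\le -f+g$ closes the argument. (Equivalently, when $\pi V<\infty$ one can start in stationarity and use $\pi QV=0$ directly; the regeneration argument is needed only to avoid assuming $\pi V<\infty$.) Since $V$ is only assumed nonnegative, I would localize over an exhaustion $B_n\uparrow\cX$ of $\cX$ by finite sets with $x_0\in B_1$: letting $\sigma_n$ be the exit time from $B_n$ and $\theta_n\triangleq\tau\wedge\sigma_n$, the stopped process $M_{t\wedge\theta_n}$ is a genuine martingale, because up to $\theta_n$ the chain lives in the finite set $B_n$ on which the generator has bounded entries. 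Dynkin's identity at $\theta_n$, together with $QV\le -f+g$ and $V(X(\theta_n))\ge 0$, then yields
\begin{align*}
&\E_{x_0}\!\Big[\int_0^{\theta_n} f(X(s))\,ds\Big]+\E_{x_0}\big[V(X(\theta_n))\big]\\
&\qquad\le\ V(x_0)+\E_{x_0}\!\Big[\int_0^{\theta_n} g(X(s))\,ds\Big].
\end{align*}

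Finally I would let $n\to\infty$. A positive recurrent chain is non-explosive, so $\sigma_n\uparrow\infty$; hence $\theta_n\uparrow\tau$ with $\theta_n=\tau$ eventually almost surely, and $X(\theta_n)\to X(\tau)=x_0$. Monotone convergence sends $\E_{x_0}\!\big[\int_0^{\theta_n} f\big]\uparrow\E_{x_0}\!\big[\int_0^\tau f\big]$ and likewise for $g$, while Fatou gives $\liminf_n\E_{x_0}[V(X(\theta_n))]\ge V(x_0)$. Passing to the limit in the displayed inequality — using finiteness of the $g$ cycle-integral to bound the $\limsup$ of its right-hand side — gives $\E_{x_0}\!\big[\int_0^\tau f(X(s))\,ds\big]\le\E_{x_0}\!\big[\int_0^\tau g(X(s))\,ds\big]$, and dividing by $\E_{x_0}[\tau]$ proves $\bar f\le\bar g$. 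I expect the main obstacle to be the martingale bookkeeping around the localization: confirming that $M_{\cdot\wedge\theta_n}$ is a true (not merely local) martingale — this is exactly where finiteness of $B_n$ and non-explosiveness are used — and that the regeneration formula for $\pi$ is applicable. Everything else is routine monotone convergence and Fatou, plus the elementary observation that the bounded-$V$ case is trivial.
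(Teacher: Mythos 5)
The paper itself does not prove Theorem~\ref{thm:kingman}: it is quoted from~\cite{hajek2012} and used as a black box, so there is no in-paper argument to compare yours against; the relevant comparison is with that reference. Judged on its own, your regenerative proof is correct and is the standard way to get such moment bounds without assuming $\pi V<\infty$: the cycle formula $\pi h=\E_{x_0}\big[\int_0^\tau h(X(s))\,ds\big]/\E_{x_0}[\tau]$ is valid for an irreducible, positive recurrent (hence non-explosive) CTMC; the localized Dynkin identity at $\theta_n=\tau\wedge\sigma_n$ is legitimate; and the step most attempts fumble --- eliminating the additive $V(x_0)$, which a naive bound leaves behind as $\bar f\le \bar g+V(x_0)/\E_{x_0}[\tau]$ --- is handled correctly by retaining $\E_{x_0}[V(X(\theta_n))]$ and using that $\theta_n=\tau$ eventually a.s., so $X(\theta_n)=x_0$ and Fatou gives $\liminf_n \E_{x_0}[V(X(\theta_n))]\ge V(x_0)$, which then cancels. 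Two points deserve an explicit line. First, the assertion that $M_{t\wedge\theta_n}$ is a true martingale is not purely a consequence of the chain living in the finite set $B_n$ up to time $\theta_n$: at the exit time $\sigma_n$ the chain sits \emph{outside} $B_n$, where $V$ need not be bounded, so you need integrability of $V(X(\theta_n))$; this follows from the implicit standing assumption that $QV(x)$ is well defined and finite, i.e. $\sum_y Q(x,y)V(y)<\infty$ for each of the finitely many states $x\in B_n$, and that observation should be stated. Second, the hypothesis only says the means are ``well defined,'' so your reduction to the case $\E_{x_0}\big[\int_0^\tau g(X(s))\,ds\big]<\infty$ (the claim being vacuous otherwise) is the right reading and worth keeping explicit. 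With those clarifications the argument is complete.
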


We then have the following scaling result.
\begin{theorem}
 Under the Mode-suppression policy, the following statements are true. %as $\lambda \rightarrow \infty$,
  \begin{enumerate}
    \item {\bf (Scaling of Swarm Size)} The average number of peers in the system $L \le C \lambda,$ where $C$ is a constant.
  \item {\bf (Scaling of Sojourn time)} The average sojourn time of the peers $W,$ is bounded.
  \end{enumerate}
\end{theorem}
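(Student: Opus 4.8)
The two claims are tied together by Little's law: Theorem~\ref{thm:stability} makes $X(t)$ positive recurrent, so in steady state peers enter at rate $\lambda$ and leave after their sojourn, whence $L = \lambda W$. Thus statement (2) is immediate from statement (1): $W = L/\lambda \le C$. So I would devote all the work to proving $L = \E_\pi[\abs{X}] \le C\lambda$ and invoke Little's law only at the very end.

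For $L$ I would use the Kingman moment bound (Theorem~\ref{thm:kingman}): it is enough to exhibit a nonnegative $\Psi$ on $\cX$, a constant $c > 0$, and a nonnegative $g$ with $\E_\pi[g(X)] \le C'\lambda$ such that
\EQ{
Q\Psi(x) \le -c\,\abs{x} + g(x)\qquad\text{for all } x\in\cX,
}
since then $cL \le \E_\pi[g(X)] \le C'\lambda$. The natural first candidate is the quadratic $\Psi_0(x) = \tfrac12\abs{x}^2$. Only arrivals (change $+\abs{x}+\tfrac12$, rate $\lambda$) and departures (change $-\abs{x}+\tfrac12$, total rate $D(x) \triangleq \sum_{j\in[m]}\frac{x_{\set{j}^c}}{\abs{x}}R_j(x)$, using the exact rate~\eqref{eqn:EqualityTransitionRate}) move $\abs{x}$, so $Q\Psi_0(x) = (\lambda - D(x))\abs{x} + \tfrac12(\lambda + D(x)) \le (\lambda + \tfrac{\mu}{2} - D(x))\abs{x} + O(\lambda)$, where I used $D(x)\le U + \mu\abs{x}$. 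The quadratic terms cancel, leaving a drift that is \emph{linear} in $\abs{x}$; it is $\le -c\abs{x} + O(\lambda)$ exactly when the departure rate beats the arrival rate, $D(x) \ge \lambda + \tfrac{\mu}{2} + c$.

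The obstruction --- and the point where mode-suppression does its work --- is that $D(x)$ can be $O(1)$ even when $\abs{x}$ is huge (a one-club, or all peers carrying the same chunk set), so $\Psi_0$ alone is not enough. I would take $\Psi(x) = \tfrac12\abs{x}^2 + \alpha V(x)$ with $V$ the stability Lyapunov function~\eqref{eqn:LyapunovFunction}, and re-derive the drift estimates behind Theorem~\ref{thm:stability} while \emph{keeping the $\abs{x}$-dependence} that the stability argument was free to discard. In the configurations where $D(x)$ is small the chunk distribution is far from uniform, and then the negative terms $\tfrac{R_j}{m}(1-\pi_j)(2(\oy - y_j)-1)$ in $QV$ --- with $R_j = U + \mu y_j$ by Lemma~\ref{lem:LowerBoundTransitionRate} --- already supply $\Omega(\abs{x})$ of negative drift: via the least-popular-chunk term when the skew is large, and via $R_j \gtrsim \mu\oy \gtrsim \mu\abs{x}/m$ together with Lemma~\ref{lemma:pi-bound2} when $D_T(x) = \emptyset$ and $\abs{x}$ is large; in the near-uniform regime $D(x)$ is itself $\Omega(\mu\abs{x})$ because a constant fraction of contacts are productive. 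One then chooses $\alpha$ (and, as in the proof of Theorem~\ref{thm:stability}, possibly $\lambda$-dependent $\delta$- and $M$-type constants) so that $\alpha(-QV)$ dominates the $(\lambda + \tfrac{\mu}{2})\abs{x}$ term off an exceptional set $F$, sets $f(x) = c\,\abs{x}$, and absorbs the bounded remainder into $g$; Kingman then gives $L \le C\lambda$, and $W = L/\lambda \le C$ follows.

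\textbf{Main obstacle.} The delicate part is converting the essentially multi-step, structural fact that mode-suppression prevents persistent missing-chunk configurations into a genuine single-step drift bound $Q\Psi(x) \le -c\,\abs{x} + g(x)$ with $\E_\pi[g(X)] = O(\lambda)$. Concretely one must (i) make the negative drift of $V$ out of a skewed state visible at order $\abs{x}$ rather than at order $1$, and (ii) balance the constants so that the exceptional set $F$ on which the combined drift is not $\le -c\,\abs{x}$ has $\E_\pi[g(X)\indicator{X\in F}] = O(\lambda)$, i.e.\ the peer population on $F$ contributes only $O(1)$ to $L$. Item (ii) is the same kind of careful constant-chasing already carried out in the proof of Theorem~\ref{thm:stability}, now pushed one power of $\abs{x}$ further, and may require bootstrapping: first using Theorem~\ref{thm:stability} to control the stationary mass of the large-$\abs{x}$ states, then feeding that back into the Kingman estimate.
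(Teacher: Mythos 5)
Your plan diverges from the paper's proof in a way that runs into a genuine obstruction: the pointwise drift bound $Q\Psi(x)\le -c\abs{x}+g(x)$ with $\E_\pi[g]=O(\lambda)$ that you want for $\Psi=\tfrac12\abs{x}^2+\alpha V$ cannot hold. The dichotomy you rely on --- ``either the chunk distribution is skewed, and then $V$ supplies $\Omega(\abs{x})$ negative drift, or it is near-uniform, and then $D(x)=\Omega(\mu\abs{x})$'' --- is false, because $D(x)$ is a \emph{departure} rate, driven by the populations $x_{\set{j}^c}$ of peers holding exactly $m-1$ chunks, not by how many contacts are productive. Consider a state with $\abs{x}=N$ arbitrarily large in which every peer holds at most $m-2$ chunks (e.g.\ all peers empty, or each peer holding a single chunk with the chunks spread uniformly): the chunk frequencies are (near-)uniform, $D(x)=0$, and every chunk-statistics functional, including $V$ of \eqref{eqn:LyapunovFunction}, changes by at most a bounded amount per transfer, while the quadratic part picks up $+\lambda(\abs{x}+\tfrac12)$ from arrivals. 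In the all-empty state the total transfer rate is only $U$, so $Q\Psi(x)\ge\lambda\abs{x}-O(1)$ there; the violation can only be absorbed into $g$, and then $\E_\pi[g]$ contains $\lambda\,\E_\pi[\abs{X}\indicator{X\in F}]$, which is exactly the quantity you are trying to bound --- the ``bootstrapping'' you mention is the unsolved core, since Theorem~\ref{thm:stability} gives positive recurrence but no moment of $\abs{X}$. There is also a quantitative failure even where your scheme does work: to beat the $+\lambda\abs{x}$ arrival term in one-club-type states, where the usable negative drift of $V$ is of order $U\oy/m\sim(U/m)\abs{x}$, you must take $\alpha\gtrsim\lambda m/U$; but then in the benign states (which carry the bulk of the stationary mass) the positive remainder $\alpha C_1\lambda\sim\lambda^2$ enters $g$, and Kingman only yields $L=O(\lambda^2)$, i.e.\ $W=O(\lambda)$, which does not prove statement (2).

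The paper avoids both problems by never attempting a drift bound on $\abs{x}$ itself. It applies Theorem~\ref{thm:kingman} to $V(x)=\sum_i\big((\opi-\pi_i)\abs{x}\big)^2+C_1(1-\opi)\abs{x}$ (your $V$ with $C_2=0$), for which arrivals contribute only the constant $C_1\lambda$ (not $\lambda\abs{x}$), and shows $QV(x)\le C_1\lambda+k-g(\mu,U,m,T,C_1)\,\oy$ pointwise with $\lambda$-independent constants; Kingman then bounds $\E[\oy]$, and since $m\oy\ge r(x)\ge\sum_{S:\abs{S}\ge1}x_S$, this bounds $L_1$, the mean number of peers holding at least one chunk, by $O(\lambda)$. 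The peers your drift cannot see --- those with zero chunks --- are handled separately by a sojourn-time comparison: under MS an empty peer can accept any non-suppressed chunk, so its per-contact success probability dominates that of any non-empty peer, giving $W_0\le W_1$, whence by Little's law $L=\lambda W_0+L_1\le\lambda W_1+L_1=2L_1\le C\lambda$ and $W=L/\lambda$ is bounded. If you want to salvage your route, you would need a separate mechanism (such as this comparison/Little's-law step) for the empty-peer population and for large uniform states with no near-complete peers; the combined quadratic-plus-$V$ drift alone will not deliver it.
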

\begin{proof}
We make use of the Kingman moment bound with the following Lyapunov function,
\begin{align}
V(x) = \sum_{i=1}^m\big( (\overline{\pi} - \pi_i)|x|\big)^2\; +\;  C_1 \big((1- \overline{\pi})\big)|x|.
\end{align}
This is similar to the Lyapunov function~\eqref{eqn:LyapunovFunction} used in the stability theorem except for the last term.
We can make use of most of the results we derived by substituting $C_2 = 0$.
To compute $QV(x)$, we divide the state space into two regions based on whether the set of suppressed chunks $D_T(x)$ is empty or not.

{\bf Region 1: $D_T(x) \neq \emptyset$, or $D_T(x) = \emptyset$ and $\cM(x) \subset [m]$.}\\
Since $C_2=0$, $U_{21} = U_{22} = 0$.
Therefore, using~\ref{prop:MeanDriftUB1}, we can upper bound $QV(x)$ by
\EQ{
\lambda C_1 - \frac{R_{\uj}}{m}(1-\upi)\big(2 (\oy - \uy ) -1\big).
}
Using $(1-\opi) \geq \frac{1}{m}$ from Lamma~\ref{lemma:pi-bound}, we can upper bound this by
\EQ{
\lambda C_1 - \frac{1}{m^2}\big(U +\mu \uy\big)\big((2  (\oy - \uy ) -1\Big).
}
Let, $\phi(\uy) = \frac{1}{m^2}\big(U +\mu \uy\big)\big((2  (\oy - \uy ) -1\Big)$.
The function $\phi(\uy)$ is concave and quadratic in $\uy$ and $ 0 \le \uy \le \oy -1$.
Hence the minimum of $\phi(\uy)$ lies at one of the extreme points, $\{0,  (\oy -1 )\}$.
Therefore, the above expression can be upper-bounded by,
\EQ{
C_1\lambda  - \min \bigg\{ \frac{U}{m^2} (2 \oy -1), \frac{1}{m^2} \big( U + \mu \oy - \mu \big) \bigg\}.
}
This upper bound can be re-written as
\EQN{
\label{eqn:bound-1}
C_1 \lambda - \oy \Big(\frac{2U}{m^2} \wedge \frac{\mu}{m^2}   \Big) + {k}.
}
where, $k > \max \{ \frac{U}{m^2}, \frac{\mu - U}{m^2} \}$ is a constant independent of $\lambda$.

{\bf Region 2: $D_T(x) = \emptyset$ and $\cM(x) = [m]$.}\\
%We can subdivide the region into two cases $\cM(x) =[m]$ and $\cM(x) \subset [m]$.
%In the first case the bound is same as in Region 1 which is given in equation~\eqref{eqn:bound-1}.
%For the case $\cM(x) =[m]$,
In this case, $\opi = \upi$ and $\oy = \uy$ and we use the upper bound for $QV(x)$ from equation~\ref{prop:MeanDriftUB2} setting $U_{21}=U{22} = 0$,
\EQ{
\lambda C_1 - \frac{R_{\oj}}{m}(1-\opi)\big(C_1 - (2T-1)(m-1) )\big).
}
Recalling that $(1-\opi) = (1-\upi) \ge \frac{1}{m}$ from Lemma~\ref{lemma:pi-bound},
the fact that $R_{\oj} = R_{\uj} = (U + \mu \uy) \ge \uy$ from non-negativity of $U$,
and since $C_1  > (2T-1)(m-1)$,
we can upper bound the RHS of the above inequality by
%\EQ{
%& \lambda C_1 - \frac{1}{m^2}\Big( U + \mu \oy \Big)\big(C_1 - (2T-1)(m-1) )\big). \nn
%}
%Since $U \geq 0$, we can further bound this by,
\EQ{
\lambda C_1 - \oy\frac{\mu}{m^2}\big(C_1 - (2T-1)(m-1) )\big). \nn
}
Combining both upper bounds, % and writing $\oy=\opi\abs{x}$,
we obtain
\EQN{
  QV(x) %&\leq C_1 \lambda -  g(\mu,U,m,T,C_1) \oy  + k,  \nn \\
  \le C_1 \lambda -  g(\mu,U,m,T,C_1) \oy + k,
}
where $g(\mu,U,m,T,C_1) = \frac{\mu}{m^2}\min \Big\{ \frac{2U}{\mu} ,1, \big( C_1 - (2T-1)(m-1)\big) \Big\}$ and $k =  \max \{ \frac{U}{m^2}, \frac{\mu - U}{m^2} \}$.

Applying the Kingman bound for $f(x) = C_1\lambda +k$ and $g(x) = g(\mu,U,m,T,C_1) \opi |x|$,
we obtain
\EQ{
\E[\oy(x)] \le  \frac{C_1 \lambda + k}{g(\mu,U,m,T,C_1)}.
}
Since the number of peers $y_i$ with chunk $i$ can be upper bounded by the number of peers $\oy$ with most popular chunk,
and hence the number of chunks $r(x) = \sum_{j \in [m]}y_j(x)$ in the system is smaller than $m\uy$.
%Therefore, it follows from the Kingman bound that
%\EQN{
%\label{eqn-kingman}
%\E[r(x)] = \E\big[\sum_{i=1}^my_i(x)\big] \le  \frac{m C_1 \lambda + m\;k}{g(\mu,U,m,T,C_1) }.
%}
Note that $r(x)$ is the number of chunks in the system,
and it exceeds the number of peers with a single chunk.
That is, $r(x) = \sum_{S \subset [m]}\abs{S}x_S \ge \sum_{S: \abs{S} \ge 1}x_S$.
%We can divide the peers into those with no chunks, and those with at least one chunk.
%Then,
Therefore, it follows from the Kingman bound that
\EQN{
\label{eqn-kingman}
\E[\sum_{S: \abs{S} \ge 1}x_S] \le  \frac{m C_1 \lambda + m\;k}{g(\mu,U,m,T,C_1) }.
}

When a peer enters the system, it has no chunks.
We can view the whole peer swarm as composed of two systems,
with system~$0$ consisting  of peers with no chunks,
and system~$1$ consisting of peers that have one or more chunks.
Peers in the system~$0$ move to the system~$1$ upon obtaining any chunk,
as shown in Figure~\ref{fig:littles-law}.
\begin{figure}[h]
  \centering
    \includegraphics[width=0.4\textwidth]{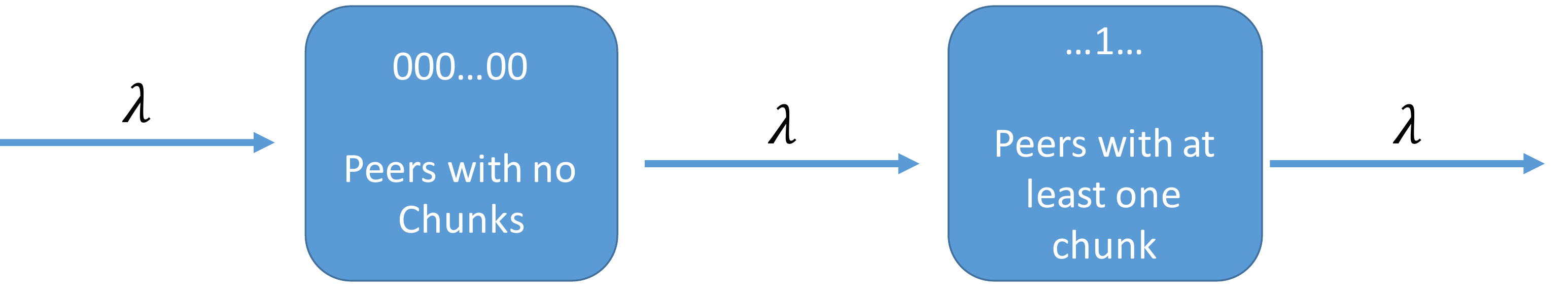}
    \caption{Arrival and departure rates of peers into different systems under MS.  The rates are all $\lambda$ since the system is stable under MS.}
 \label{fig:littles-law}
\end{figure}

Since under MS, the probability of obtaining a chunk is highest when a peer possesses no chunks, the waiting time in system $0,$ denoted by $W_0 $ is upper bounded by the waiting time in  system $1,$ denoted by $W_1$. That is $W_0 \leq W_1$.
Then by Little's Law, the average number of peers in the system
\EQ{
L = \lambda W = L_0 + L_1 = \lambda W_0 + L_1  \le \lambda W_1 + L_1  = 2L_1.
}
Substituting for $L_1$ from the Kingman bound~\eqref{eqn-kingman} we get
\EQ{
L \le \frac{2m (C_1 \lambda + k) }{ g(\mu,U,m,T,C_1)}
\text{ which implies }W \le \frac{2m (C_1  + {\frac{k} {\lambda}}) } { g(\mu,U,m,T,C_1)}.
}
Therefore, the average sojourn time $W$ is bounded and the upper bound on average swarm size $L$ scales linearly with $\lambda$.

\end{proof}

\section{Distributed Policies}\label{sec:distrib}
Although mode-suppression is simple to implement, it does require global information of chunk frequencies.  We now propose two policies that circumvent this requirement.%, and study their performance through simulations

\subsection{Distributed Mode-Suppression Policy}
Under \emph{distributed mode-suppression} (DMS), a peer contacts three other peers at random, and among the chunks available with more than one peer, we define the \emph{local mode} to be the chunk(s) with greatest frequency.   The peer is allowed to download any chunk that is not part of the local mode. Any chunk may be downloaded if all chunks are equally frequent. 

Let $B^j, j = {1,2,3}$ denote the chunk profiles of three selected peers and $B = \left\{B^1,B^2,B^3\right\}$, 
then we can write the modes %\red{What happened to seed?}
\eqn{
  \label{eq:M_DMS}
&\cM_{DMS}(x,B) = \nonumber \\
& \left\{ i \in [m] \biggr\rvert \sum_{j=1}^{3}B^j_i \ge \sum_{j=1}^{3}B^j_k, \forall k \in [m],  {\sum_{j=1}^{3}B^j_i > 1 }\right\},
}
and we write the set of suppressed chunks (regardless of whether the chunk is downloaded from the seed or another peer) as 
\EQN{
\label{eq:D_DMS}
D_{DMS}(x,B) = 
\begin{cases}
  \cM_{DMS} 	& \mbox{ if }   \cM_{DMS}\neq [m],\\
  \emptyset 	& \mbox{ if }   \cM_{DMS} = [m].
\end{cases}
}
The steps of the distributed mode-suppression policy are shown in Algorithm~\ref{algo:DMS}.
\begin{algorithm}
  \caption{Distributed Mode Suppression for peer~$p$}
  \label{algo:DMS}
\begin{algorithmic}
  \STATE $S \leftarrow $ Chunk profile of $p$  \\
  \WHILE{$S \neq [m]$}
  \STATE $t \leftarrow t+ \tau, \text{ where } \tau \sim \exp(\mu)$   \\
  \STATE $x \leftarrow X(t)$, $S \leftarrow $ Chunk profile of $p$  \\
  \STATE Select three source peers ($B^i$) randomly \\
  \STATE Compute $D_{DMS}(x,B)$ from~\eqref{eq:D_DMS} \\
  \STATE Choose a chunk~$j$ randomly from $ \cup_{i=1}^{i=3}B_i\backslash \Big( S \cup D_{DMS}(x,S)\Big)$ \\
  \STATE Update $S \leftarrow S \cup \{j\}$ \\
  \ENDWHILE
\end{algorithmic}
\end{algorithm}

\begin{theorem}
The stability region of Distributed Mode-Suppression (DMS) is $\lambda >0$ if $m=2, \mu>0$ and $U >0$.
\end{theorem}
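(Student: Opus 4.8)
The plan is to reduce to the $m=2$ chain and then invoke the known stability of the rare-chunk (RC) policy. For $m=2$ every peer in the system holds the empty set, $\{1\}$, or $\{2\}$ (a peer with both chunks leaves at once), so the state is $x=(n_0,n_1,n_2)$ with $y_1=n_1$, $y_2=n_2$, $r(x)=n_1+n_2$, and by the chunk-relabelling symmetry of the policy it suffices to treat $n_1\ge n_2$; the ``one club'' is then the set of $n_1$ peers holding chunk~$1$ and needing chunk~$2$. The key structural observation is that for $m=2$ no sampled peer holds both chunks, so in~\eqref{eq:M_DMS}--\eqref{eq:D_DMS} ``chunk~$j$ is the local mode with multiplicity $>1$'' reduces to ``at least two of the three sampled peers hold chunk~$j$'' (it is then automatically the strict maximum); a short case check over the ten profiles of the sampled triple shows that $D_{DMS}(x,B)$ forbids exactly the transfers that RC forbids, i.e.\ DMS and RC coincide as chunk-selection policies when $m=2$. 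The theorem then follows from the stability of RC for all $\lambda>0$ established in~\cite{reittu2009,reittu2011,oguz2015}.

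For a self-contained argument one re-runs the Foster--Lyapunov drift of Theorem~\ref{thm:stability}, specialized to $m=2$. The transition rates come from Section~\ref{sec:SelectPolicy}: each of the five transition types (an $\emptyset$-arrival at rate $\lambda$; $\emptyset\to\{1\}$; $\emptyset\to\{2\}$; a $\{1\}$-peer or a $\{2\}$-peer completing) carries a contact rate times the probability, over the uniformly random triple, that the relevant chunk is present among the three and not suppressed; with $\pi_i=n_i/|x|$ the only combinatorial ingredient is the suppression probability $p_1(x)=3\pi_1^2-2\pi_1^3$ (and $p_2$ by symmetry), from which $p_1(x)\to1$ as $\pi_1\to1$, $p_1(x)=O(\pi_1^2)$ for small $\pi_1$, and the explicit rates, e.g.\ the $\{1\}$-completion rate $3\mu\,\pi_1\pi_2|x|(1-\pi_2)^2+U\pi_1(1-p_2)$. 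One then partitions $\cX$ by the size of $\pi_1$ and of $r$ versus the constant $M$: when $\pi_1,\pi_2$ are both bounded below both chunks circulate and completions occur at rate $\Theta(\mu|x|)$; when $\pi_1$ and $r$ are small the policy degenerates to random selection but the seed still injects chunks at net rate $\gtrsim U$, handled by the $C_2(M-r)^{+}$ term; when $\pi_1$ is large and $n_1\gg n_2$, strong suppression of chunk~$1$ stalls the one club and drives the relevant penalty down.

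The main obstacle is the regime in which a large pool of chunkless peers is feeding an already-large one club. At a state like $(\delta|x|,(1-\delta)|x|,1)$ the ``gap'' $\pi_0+\pi_2$ is non-negligible, chunk~$1$ is only partially suppressed, and $\emptyset$-peers keep acquiring it at rate of order $\mu\delta^3|x|$, so $n_1$ is still growing; at a state like $(0,K,1)$ the chunkless pool is exhausted, chunk~$1$ is suppressed almost surely, and the one club drains at rate $\approx 3\mu+U$. A Lyapunov function that falls in both situations must make its derivative in the $n_1$-direction small relative to the $n_0$-direction in the first case but large in the second; one checks that the simple combination ``quadratic imbalance $(n_1-n_2)^2$ plus linear terms'' used in~\eqref{eqn:LyapunovFunction} has \emph{positive} drift of order $\mu|x|^2$ at the first family of states, so the self-contained route needs the more careful construction carried out for RC in~\cite{oguz2015}. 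Finally, everything here is special to $m=2$: for $m>2$ a peer may hold one of exponentially many subsets, the local mode among three sampled profiles is no longer controlled by the global frequencies, the clean formula for $p_1$ (and the RC equivalence) break down, and stability stays open --- the ``Unknown'' entry for DMS with $m>2$ in Table~\ref{tbl:comparison}.
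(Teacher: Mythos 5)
Your proposal is correct and takes essentially the same route as the paper: the paper's (one-line) proof just asserts that the $m=2$ case goes through with the same Lyapunov function and steps as the Rare Chunk analysis of \cite{oguz2015}, which is exactly the reduction you make precise via the ten-case check that DMS and RC coincide when $m=2$. Your added observation that the paper's own MS Lyapunov function \eqref{eqn:LyapunovFunction} picks up positive drift of order $\mu|x|^2$ at states where a large chunkless pool feeds the one club correctly explains why the RC construction, rather than a direct reuse of Theorem~\ref{thm:stability}, is what is needed.
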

\begin{proof}
The proof for $m=2$ chunks follows using the same Lyapunov function and steps as the
proof of the Rare Chunk policy \cite{oguz2015}, and is hence omitted.
\end{proof}
Stability for the case $m> 2$ chunks is left as a conjecture.

\subsection{EWMA Mode-Suppression} 
Under this policy, each peer calculates the empirical marginal chunk frequencies based only on the chunks possessed by all peers that it has met until (and including) the current time.   The marginal chunk frequency is calculated using an Exponentially Weighted Moving Average (EWMA) taking into account both history and present, and the mode of this estimate is suppressed.

Let $n \in \mathbb{N}$ denote the index of poisson ticks of a peer. We define empirical marginal chunk frequencies of a peer $p$ with $\tilde{\pi}^n(p)$ and are computed as below for each chunk $j \in [m]$,
\begin{align}
  \tilde{\pi}_j^0(p) &= 0, %\forall j \in [m] \\ 
 \label{eq:pi_ewma}
\tilde{\pi}_j^n(p) &= (1-\alpha) \tilde{\pi}_j^{n-1}(p) + \alpha B^n_j, %\quad \forall j \in [m]
\end{align}
where $B^n$ denotes the chunk profile of the source peer selected at time slot $n$ by peer~$p$ and $\alpha \in (0,1)$ is the exponential weighting parameter. 
The modes for this policy are defined as 
\begin{align}
\cM^n_{EWMA}(p)& = \left\{ i \rvert \tilde{\pi}^n_i(p) \geq   \tilde{\pi}^n_j(p) \forall j \in [m] \right\},
\end{align}
and the set of suppressed chunks (regardless of whether the chunk is downloaded from the seed or another peer) are denoted by 
\begin{align}
  \label{eq:D_ewma}
D^n_{EWMA}(p) & =\begin{cases}
  \cM^n_{EWMA} & \mbox{ if }   \cM^n_{EWMA}\neq [m],\\
  \emptyset & \mbox{ if }   \cM^n_{EWMA} = [m].
\end{cases}
\end{align}
The steps of EWMA Mode-Suppression Policy is shown in Algorithm~\ref{algo:EWMA}.
\begin{algorithm}
  \caption{EWMA Mode-Suppression for peer~$p$}
  \label{algo:EWMA}
\begin{algorithmic}
  \STATE $S \leftarrow $ Chunk profile of $p$, $n = 0$  \\
  \WHILE{$S \neq [m]$}
  \STATE $t \leftarrow t+ \tau, \text{ where } \tau \sim \exp(\mu)$,  $n \leftarrow n+1 $  \\
  \STATE $x \leftarrow X(t)$  \\
  \STATE Pick a source Peer (B) randomly \\
  \STATE $\forall j \in [m],$ compute $\tilde{\pi}^n_j(p)$ from \eqref{eq:pi_ewma} and $D^n_{EWMA}(p)$ from \eqref{eq:D_ewma} \\
  \STATE Choose a chunk~$j$ randomly from $B \backslash \Big(S \cup D^n_{EWMA}(p)\Big)$ \\
  \STATE Update $S \leftarrow S \cup \{j\}$ \\
  \ENDWHILE
\end{algorithmic}
\end{algorithm}

\section{Simulation Results}
\begin{figure*}[t]
    \includegraphics[width=\textwidth]{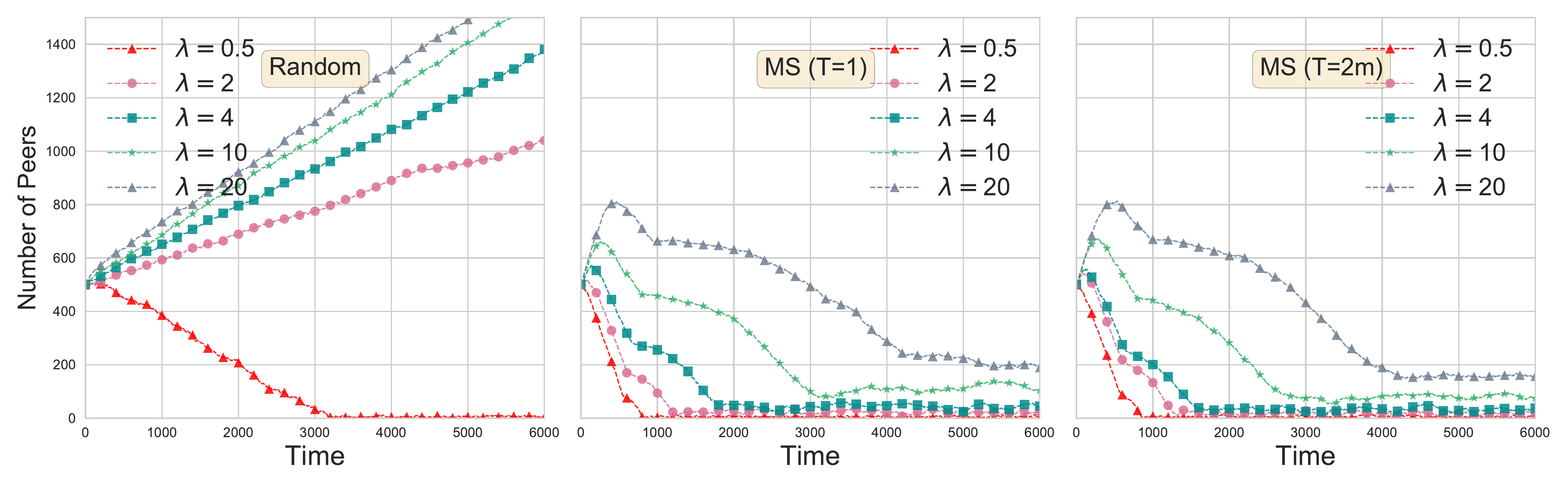}
    % \caption{Number of peers in the system when $m=5$, $U=1$ and $\mu=1$.  Random becomes unstable in some cases, whereas MS and DMS are always stable.}
    \caption{Number of peers in the system when $m=5$, $U=1$ and $\mu=1$.  Random becomes unstable in some cases, whereas MS is always stable.}
  \label{fig:number-of-peers}
\end{figure*}
In this section, we show the results from numerical simulations that illustrate the performance of different chunk selection policies.  Recall that our candidate policies are (i) random chunk selection, (ii) rarest-first, (iii) rare chunk, (iv) common chunk, (v) group suppression,  (vi) mode-suppression, (vii) distributed mode-suppression, and (viii)  mode-suppression-EWMA.  A description of these policies can be found in Sections~\ref{intro}, \ref{sec:SelectPolicy}, and~\ref{sec:distrib}.
For all the simulations, we set the peer contact rate $U$ and seed contact rate $\mu$ as $1$.
To simulate a Poisson process, we make use of the fact that inter arrival times of a Poisson process follow an exponential distribution.
Each peer in the system, including the seed, generates an exponential random variable with mean $\frac{1}{\mu} = \frac{1}{U} = 1$, and the peer or the seed with the smallest value gets a chance to contact another peer.
After the contact, a chunk transfer takes place instantaneously according to the chosen chunk selection policy.

\subsection{Stability of Mode-Suppression Policy}
We begin the simulation with 500 empty peers.  Whenever a peer receives all the chunks, it immediately leaves the system.  In Figure~\ref{fig:number-of-peers}, we plot the number of peers in the system as time progresses for three different polices, namely (i) random chunk selection, (ii) mode-suppression, and (iii) distributed mode-suppression.  The purpose of simulating the random chunk selection policy, which is known to be unstable, is to provide a visual representation of what an unstable regime appears like in order to compare with stable policies.  In this simulation, the number of chunks is taken as $5,$ and the peer arrival rate ($\lambda$) is varied.  We observe that when the peer arrival rate is less than seed rate ($\lambda =0.5 < 1 = U$), the random chunk selection policy is stable.
In all other cases where we have chosen the peer arrival rate $\lambda > U$, the number of peers grows large and the system is unstable.  However, in case of mode-suppression and distributed mode-suppression, the system is stable for all arrival rates.

\subsection{Missing Piece Syndrome in Random Chunk Selection}
We observed in Figure~\ref{fig:number-of-peers} that the random chunk selection policy is not stable when $\lambda > U$.
We illustrate the reason for this instability by observing the evolution of the chunk frequency.
In Figure~\ref{fig:random-chunk}, we plot the time evolution of the number of peers and the fraction of peers having different chunks in the system, for the random chunk selection policy with $m=5$ and $\lambda = 4$.
We see that when number of peers becomes large, one chunk remains rare.
As time progresses, the chunk represented by the red/starred line becomes rare and remains rare forever.
However, all other chunks are available with most of the peers.
This is precisely the formation of the \emph{one-club} caused by the {\it missing piece syndrome}.
\begin{figure}[h]
    \includegraphics[width=0.5\textwidth]{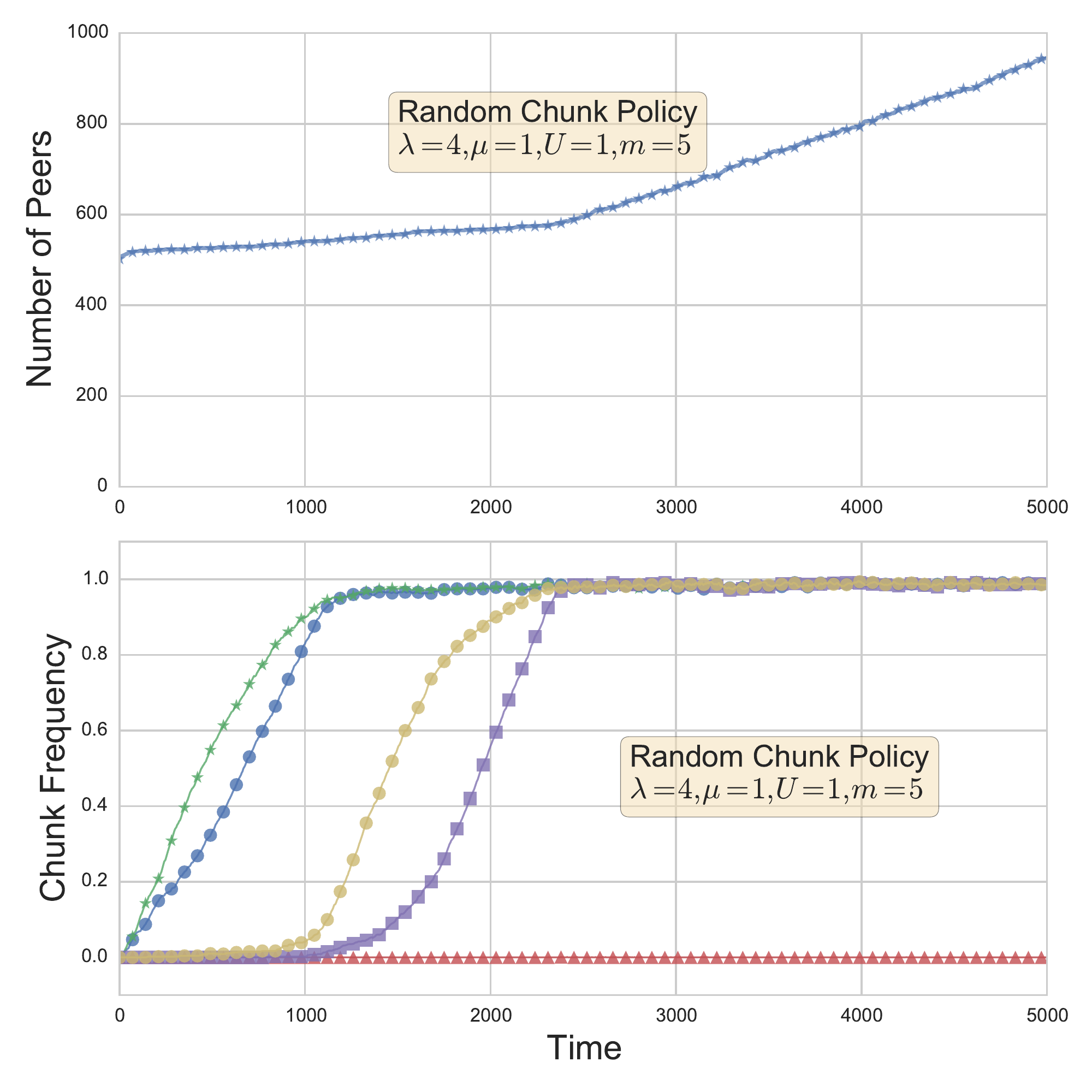}
    \caption{Evolution of peers and chunk frequencies under the random chunk selection policy.  One of the chunks becomes a ``missing chunk'' (red/starred line).}
  \label{fig:random-chunk}
\end{figure}

\begin{figure*}[t]
  \centering
    \includegraphics[width=0.9\textwidth]{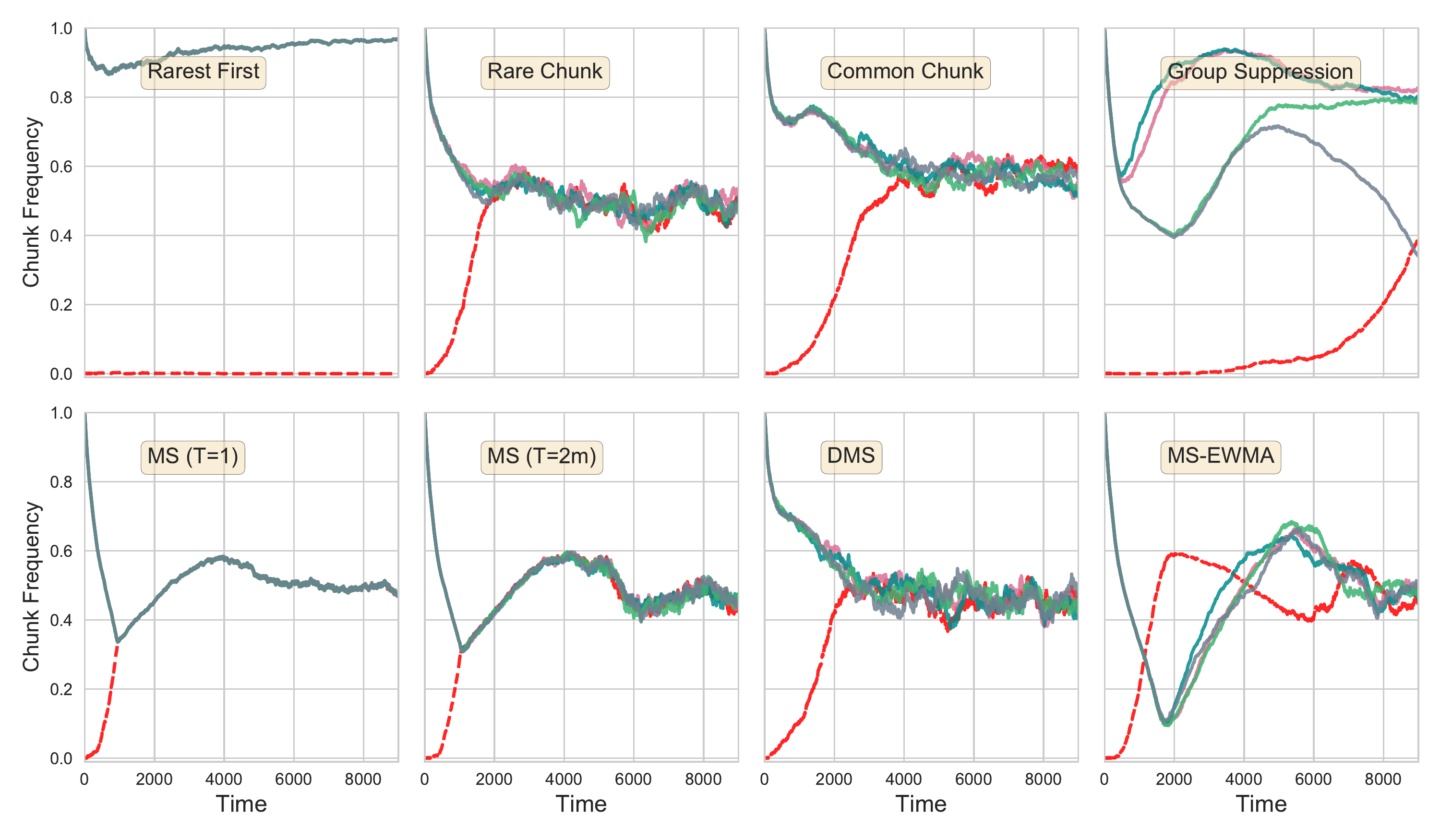}
    \caption{
    Chunk frequency evolution in a system with $m=5$ chunks under different policies when starting from the state of a ``missing-chunk'' (whose frequency is indicated by a red/dashed line).  Rarest-first is clearly unstable, since it cannot recover, whereas the other protocols manage to bring the chunk back into peer circulation and stabilize the system. }
  \label{fig:chunk-evolution}
\vspace{-0.2in}
\end{figure*}

\subsection{Chunk Frequency Evolution}
A stable chunk selection policy has to be robust to the one-club state.
In other words, a stable policy should be able to boost the frequency of a rare chunk.
To see how different policies handle the one-club situation, we start the system with 500 peers that have all the chunks except first chunk (i.e., all peers are part of the one-club).
In Figure~\ref{fig:chunk-evolution}, we plot the evolution of the chunk frequency for different policies under this initial condition.
We see that when using the rarest-first policy, the rare chunk remains rare and abundant chunks remain abundant, which is a clear sign of instability.
In all stabilizing policies, the rare chunk is made available by giving priority to that chunk in some way.
For instance, in case of mode-suppression ($T=1$), no other chunk will be transmitted until the frequency of the rare chunk is equal to the frequency of all other chunks.
Once this happens, the frequencies of the different chunks remain almost same, and hence we only see a thin spread across the frequencies.
Other policies also manage to bring the rare chunk back into circulation and the corresponding statistics become similar to all other chunks.
We also observe that the \emph{stabilization time} to increase the frequency of rare chunk to the same level as that of other chunk frequencies, is shorter for MS and DMS when compared to other algorithms.

\begin{figure*}[t]
  \centering
    \includegraphics[width=0.7\textwidth]{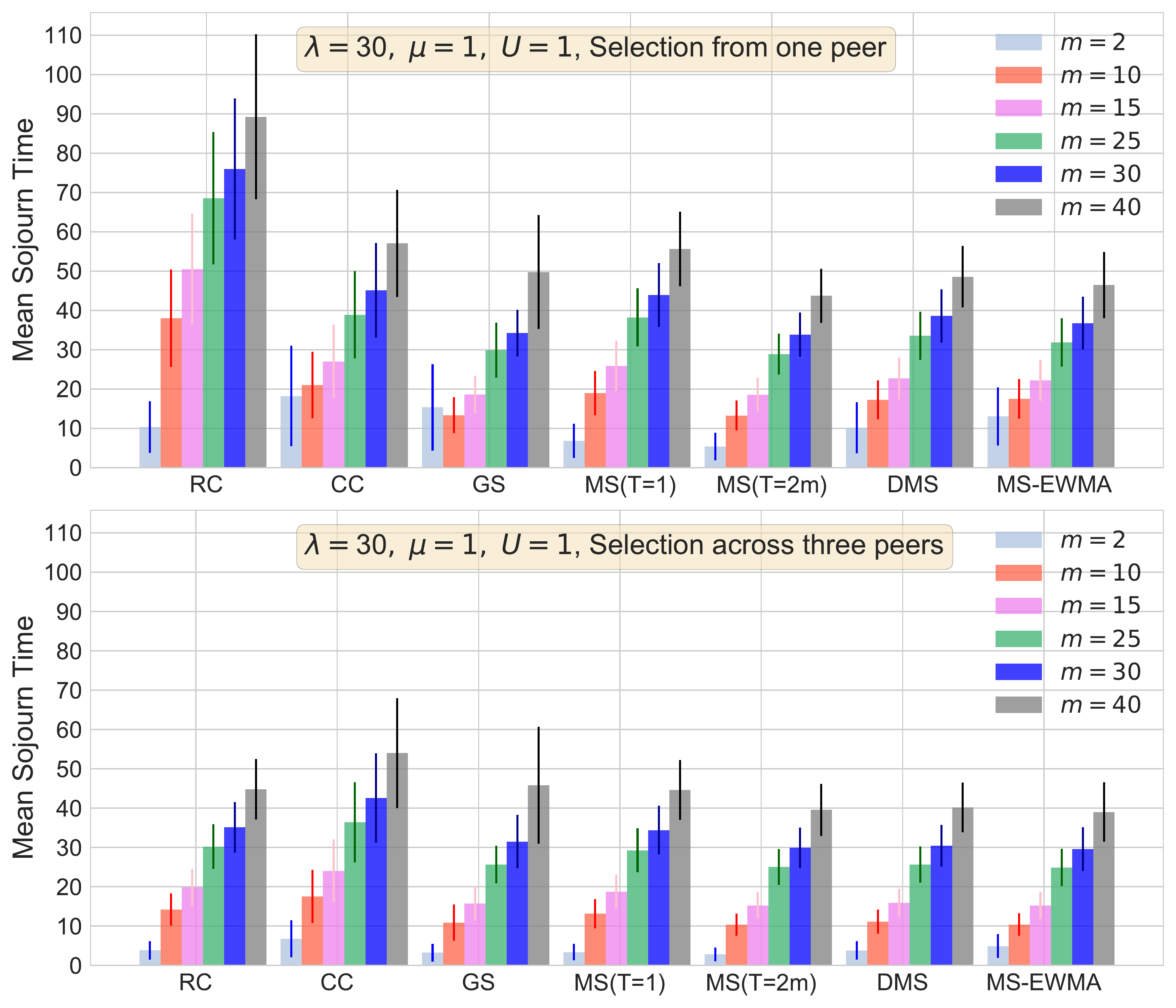}
    \caption{Stationary mean sojourn times of stable policies for different values of $m.$ The two regimes are to download a chunk from 1 peer, or to download one chunk from the chunk set of 3 peers.}
    % \caption{Mean sojourn times of policies for different values of $m.$  Distributed mode-suppression has the best performance in all cases.}
  \label{fig:sojourn-times}
\end{figure*}
\subsection{Sojourn times}
In a stable system, an important performance metric is the sojourn time of a peer, which is defined as the amount of time a peer spends in the system collecting all chunks before leaving.  For numerical illustration of sojourn time, we fix the peer arrival rate at $\lambda =30,$ and we calculate the mean stationary sojourn times of the peers under different policies, for different values of the number of file chunks $m.$  The stationary sojourn times are obtained by running the system for a long period of time and ignoring the first 2000 peers that left the system.    Our goal is to evaluate how effectively the algorithms use their  information on chunk statistics.

Our first result is on determining the value of threshold $T$ that minimizes the sojourn time under MS.  Intuitively, the threshold is a way of allowing ``noisy'' suppression of the mode.  It seems reasonable that as the number of chunks increases, the amount of noise permitted should also be allowed to increase in the interest of allowing more sharing to take place .  Thus, we numerically studied different values of $T$ that are increasing with the number of chunks $m,$ and found empirically that setting $T=2m$ appears to minimize the sojourn time under MS.
%\red{Why $T$ increasing with $m$ makes sense?}

We also wish to study the effect of chunk diversity provided through the ability to choose a chunk from the set of chunks possessed by 1 versus 3 peers.  Thus, we have two versions of each algorithm that both use identical chunk statistics (obtained through sampling some or all peers as per the algorithm).  However, the first version  can obtain any one chunk from those possessed by 1 randomly selected peer, while the second can pick any one chunk from the set of chunks possessed by 3 randomly selected peers.

In Figure~\ref{fig:sojourn-times}, we present a  comparison of sojourn times across the different algorithms.  The increased sojourn times of RC and CC are visible, although increasing chunk diversity by sampling 3 peers improves RC considerably. GS has good performance, although the variability in sojourn time seen in the error bars (standard deviation) is high, particularly when $m$ is large.  The variants of MS all perform well, with the MS ($T=2m$), DMS and MS-EWMA all showing low sojourn times.   It is interesting to note that in the example, since the contact rate is 1, the best case sojourn time is equal to the number of chunks $m.$  We see that for the case of sampling 3 peers, the mode-suppression variants  MS($T=2m$), DMS and MS-EWMA attain a mean sojourn time that is very close to $m,$  indicating that they achieve a near-optimal tradeoff between suppression (to keep peers in the system)  and sharing (to enable peers to gather chunks).
%We see that GS and MS have comparable performances, while DMS has the least average sojourn times among all policies in both scenarios.
% We did not include the first 2000 peers that left the system in order to allow the system to reach equilibrium before averaging.
% We see that all algorithms have comparable performance, but distributed mode-suppression (DMS) has the least average sojourn times among all policies in both scenarios.
% Also note that on average, the stationary sojourn time for DMS(3) is essentially the same as the number of chunks $m,$ i.e., peers collect close to 1 chunk per unit time on average.
% Since the rate of peer contact is 1, this fact indicates that among the algorithms compared, DMS-3 attains the best possible trade-off between suppression (to keep peers in the system) and sharing (to enable peers to gather chunks).

\section{Conclusion}

In this work, we analyzed the scaling behavior of a P2P swarm with reference to its stability when subjected to an arbitrary arrival rate of peers.   It has been shown earlier that not all chunk sharing policies are stable in such a regime, and our goal was to design a simple and stable policy that yields low sojourn times.  Our main observation was that, contrary to the traditional approach of boosting the availability of rare chunks, preventing the spread of chunk(s) that are more frequent as compared to the lowest frequency chunks (where the maximum allowed threshold is a parameter of the algorithm) yields a simple and stable policy that we entitled mode-suppression (MS).  We analytically proved its stability, and showed that the sojourn time under this algorithm does not scale up with increasing demand (peer arrival rate).   We also described distributed versions of the policy that work on the same principle, but do not require global chunk frequency estimates.  Our results indicate that there is a delicate trade-off between sharing (i.e., uploading a useful chunk if at all possible) and suppression (i.e., trying to reduce chunk transfers to keep peers in the system so that they can help others).  We showed in numerical studies that MS with an appropriately selected threshold, as well as the heuristic distributed versions yield low (near-optimal) sojourn times.  An additional observation is that it appears that the chunk diversity provided by choosing a chunk from the set possessed by three randomly selected peers is sufficient for attaining this near-optimal performance.

% Our results indicate that there is a delicate trade-off between sharing (i.e., uploading a useful chunk if at all possible) and suppression (i.e., trying to reduce chunk transfers to keep peers in the system so that they can help others).
% The chunk selection policy has a fundamental impact on this trade-off.
% On one hand, by suppressing some chunk sharing (as in the GS or MS algorithms), we can ensure peers stay longer at the expense of increasing sojourn time, with too much suppression leading to instability.
% On the other hand, trying too hard to be work conserving (maximizing sharing as in random or RF) with the idea of reducing sojourn times can lead to instability due to chunk starvation.
% %%The global-chunk-frequency-based policies, although stable, appear to suppress a little too much, leading to reduced sharing and slightly elevated sojourn times.  However, the distributed algorithms that are more lenient in when to suppress, do better at attaining lower sojourn times.
% Our future work will be to obtain a deeper understanding of the trade-off between suppression and sharing with respect to the number of sampled peers to minimize sojourn time for stable protocols.

% \bibliographystyle{IEEEtran}
% \bibliography{references}
% Generated by IEEEtran.bst, version: 1.12 (2007/01/11)

% \input{9.appendix}
\end{document}